\documentclass[a4paper,12pt]{article}
\usepackage{amsmath,amssymb,amsfonts}
\usepackage{amsthm}
\usepackage[margin=1in]{geometry} 
\usepackage{graphicx,xcolor}
\usepackage{mathtools}
\usepackage{booktabs}
\usepackage{hyperref}
\hypersetup{urlcolor=blue, citecolor=green, linkcolor=blue}
\usepackage{caption}
\usepackage{tabularx}
\usepackage{subcaption}
\usepackage{placeins}
\usepackage[numbers,square,sort&compress,longnamesfirst]{natbib}
\usepackage{float} 

\usepackage{lmodern}
\usepackage[T1]{fontenc}
\usepackage{cellspace}
\newtheorem{lemma}{Lemma}
\newtheorem{prop}{Proposition}
\newtheorem{theorem}{Theorem}
\newtheorem{corollary}{Corollary}
\numberwithin{equation}{section}
\newtheorem{remark}{Remark}
\newtheorem{assumption}{Assumption}
\newtheorem{definition}{Definition}

\newtheorem*{sassumption}{Standing Assumption}

\numberwithin{theorem}{section}
\numberwithin{lemma}{section}
\numberwithin{prop}{section}
\numberwithin{corollary}{section}
\numberwithin{equation}{section}
\numberwithin{remark}{section}
\numberwithin{assumption}{section}
\numberwithin{definition}{section}
\numberwithin{example}{section}

\usepackage{scalerel,stackengine}

\newcommand\EE {\mathbb E}

\newcommand\RR {\mathbb R}

\newcommand{\nihil}[1]{}

\allowdisplaybreaks

\begin{document}
\title{Infinite Horizon Optimal Consumption: Intertemporal Hedging under Epstein-Zin Preferences}
\author{
Erhan Bayraktar\thanks{University of Michigan, Department of Mathematics, Ann Arbor MI 48109, United States. 
\texttt{erhan@umich.edu}. Supported in part by the National Science Foundation under grant DMS-2507940 and the Susan M. Smith Chair.}
\and
Emmet Lawless\thanks{University of Michigan, Department of Mathematics, Ann Arbor MI 48109, United States. \texttt{elawless@umich.edu}.}
}
\date{}
\maketitle


\begin{abstract}
We study an infinite-horizon optimal consumption-investment problem for an investor with Epstein-Zin stochastic differential utility in an incomplete market with stochastic investment opportunities. Risk aversion and intertemporal substitution are separated, and we work in the regime $\theta\in(0,1)$, where there exists a unique generalised utility process for arbitrary non-negative progressively measurable consumption streams. Our main contribution is a variational characterisation of the value function. We show that the value function is the unique minimiser of a functional whose Euler-Lagrange equation coincides with the Hamilton-Jacobi-Bellman equation. Although the functional may be non-convex, the direct method yields existence, and we prove that every minimiser is a strictly positive, bounded classical solution. A verification theorem identifies any minimiser with the value function and gives feedback representations for optimal consumption and investment policies. The proof combines a change of measure to the myopic probability with uniqueness results for Epstein-Zin BSDEs and a perturbation argument for optimality. Examples with stochastic volatility, Gaussian excess returns, and fat-tailed excess returns illustrate the scope of the framework and its implications for intertemporal hedging.
\end{abstract}

\textbf{Keywords:} optimal consumption, incomplete markets, Epstein-Zin preferences,\\
\phantom{\textbf{Keywords:}} infinite horizon, stochastic investment opportunities, calculus of variations
\smallskip
	
\textbf{MSC (2020):} 91G10.
\smallskip
	
\textbf{JEL:} C61.
	
\newpage

\section{Introduction}
The classical literature on optimal consumption in incomplete markets has focused mainly on time-additive utility, which forces preferences to be captured by a single risk-aversion parameter (denoted $\gamma$). Epstein-Zin stochastic differential utility (SDU) offers a framework that separates two competing forces: risk aversion and attitudes toward intertemporal substitution (captured by the elasticity of intertemporal substitution (EIS), denoted $1/\delta$).\footnote{Other works in the literature use $\psi$ to denote EIS; in our notation, $\psi=1/\delta$.} The latter allows direct modelling of an agent's preference over the timing of uncertainty resolution, as captured by the relative ordering of $\gamma$ and $\delta$. If $\delta < \gamma$, agents prefer early resolution, while if $\gamma < \delta$, agents prefer late resolution. If $\delta=\gamma$, the agent is indifferent to the timing of uncertainty resolution, which gives the special case of time-additive preferences. In this paper we work in the regime $\theta \coloneqq \frac{1-\gamma}{1-\delta} \in (0,1)$, which accommodates both preferences regarding the timing of uncertainty resolution.

The core focus of this paper is to extract economic insight into the optimal behaviour of an infinite-horizon investor with Epstein-Zin preferences and stochastic investment opportunities. We employ the variational approach to portfolio choice introduced in \citet{GLT:VariationalApproach(2025)} to characterise the value function as the unique minimiser of a \emph{non-convex} functional whose Euler-Lagrange equation is exactly the HJB equation.\footnote{A variational formulation of the classical finite-horizon Merton portfolio optimisation problem is presented in the short note of \citet{lorig(2025)}.}

First, we construct a solution to the variational problem via the so-called direct method, yielding the existence of a minimiser. Next, we prove \emph{any minimiser} is a bounded classical solution to the HJB equation by employing truncation arguments, Sobolev embeddings, and elliptic regularity results.

Our verification argument proves that an arbitrary minimiser is indeed the value function; the minimiser is therefore unique despite non-convexity. Verification requires two key steps. Under a change of measure to the myopic probability measure (see \citet{GR:Portfolios&RiskPremia(2012)}), we show that the candidate value function solves the Epstein-Zin BSDE for the candidate optimal controls. Uniqueness of the utility process induced by the candidate controls then follows from the general market-free analysis in \citet{HHJ(2023):EZutilityII}. To prove optimality, we extend the perturbation approach to verification used in \citet{HHJ(2023):EZutilityII} for a Black-Scholes-Merton market to our setting with a stochastic state variable.

The appeal of the variational approach is that the main object under study becomes a functional rather than the HJB equation---a second-order semilinear (and possibly singular) ODE with no boundary conditions. The abstract tools of weak convergence and Sobolev embeddings allow us to prove strong results under relatively weak assumptions on the problem data. This approach is distinct from the direct analysis of the HJB equation in \citet{GHH:ConsumptionINvestmentSFM(2025)}, where the authors develop a general theory of second-order semilinear elliptic ODEs on open domains for the optimal consumption problem with CRRA preferences. The variational characterisation also provides a natural object to discretise, removing the need to impose economically unmotivated boundary conditions when solving the ODE directly.

Studying elliptic equations via a variational formulation is commonplace in the PDE literature, see \citep{badiale2010semilinear,Struwe:(2008)}. To the best of our knowledge this approach has not been exploited in the mathematical finance literature on optimal consumption prior to the work of \citet{GLT:VariationalApproach(2025)}.

To emphasise the utility of our results, we focus on examples featuring stochastic volatility and stochastic excess returns. It is now widely accepted that the assumption that excess returns have Gaussian tails is not supported by the data. However, this fact is routinely ignored in the optimal consumption literature because departures from standard Gaussian-type models typically destroy analytical tractability. Our framework addresses this shortcoming and can be used for a wide class of non-standard models. We compare optimal policies in the standard Gaussian regime with those under fat-tailed excess returns. We find that fat-tailed returns generate a convex-to-concave transition in the optimal consumption ratio as a function of the state variable, a feature absent under Gaussian returns.


The rest of this article is organised as follows. Section \ref{sec. Literature Review} reviews relevant literature. In Section \ref{sec. Model and problem formulation} we formulate the agent's optimisation problem, state all assumptions, and introduce the variational problem. Section \ref{sec. Main result} contains our main result and explains the connection between the HJB equation and the variational problem. In Section \ref{sec. Examples} we present three examples to which our main theorem can be applied. The verification argument is presented in Section \ref{sec. Verification}, we give concluding remarks in Section \ref{sec. Conclusion}, and all proofs are collected in the Appendix.

\section{Literature Review} \label{sec. Literature Review}

Recursive utility of Epstein-Zin type was introduced in the seminal papers of Epstein and Zin \citep{EZ:(1989),EZ:(1991)}, where the authors extended the work of \citet{KP:Econometrica(1978)} on dynamic choice over lotteries to a discrete-time infinite-horizon setting. Extensions to continuous time for general aggregators, leading to the notion of stochastic differential utility (SDU), were developed by Duffie and Epstein \citep{DE(1992):Econometrica(SDU),DE(1992):RFS}. Subsequent work connected recursive utility to BSDE, PDE, martingale, and utility-gradient methods \citep{DL(1992):JME,DS(1994):JME,SS(1999):JET-ConsumptionSDU,CV(1999):QJE,SS(2003):SPA}. These papers provide the basic analytic language for the present problem, but much of the early theory either assumes Lipschitz aggregators, complete markets, unit elasticity of intertemporal substitution, or model structures that allow explicit solutions.




Rigorous verification arguments were provided in \citet{KSS:FS(2013)EZ} in the finite- and infinite-horizon cases for the Epstein-Zin aggregator and non-unit elasticity of intertemporal substitution. The authors relied on a one-sided Lipschitz condition and restrictions on admissible parameters\footnote{The authors required the following identity to hold: $\frac{1}{\delta}=2-\gamma+\frac{(1-\gamma)^2}{\gamma}\rho^2$. Here $\rho$ is the correlation between the Brownian motion driving the risky asset and the Brownian motion driving the state variable.} to linearise the associated HJB equation. Despite these advances, verification was still ad hoc and proven on a model-by-model basis. For the infinite-horizon case, the assumptions imposed in \citet{KSS:FS(2013)EZ} are difficult to check and rely heavily on the explicit representation of the solution of the linearised HJB equation. Following this work, the link between discrete-time recursive preferences and SDU was rigorously established via convergence arguments in \citet{KS(2014):JET-EZdiscreteContinuous}.

On finite horizons, later work removed some of the earlier parameter
restrictions through fixed-point, BSDE, and duality methods, typically under boundedness, Lipschitz, or local-Lipschitz assumptions on model coefficients \citep{KSS:FS(2017)EZ,Xing:EZConsumption(2017),MX(2018):MF-EZ}. A now substantial finite-horizon literature treats constraints, random horizons, mean-field models, leverage restrictions, and non-Markovian
models \citep{AH(2021):SIFIN-EZ,AH(2023):MF-EZ,FT(2023):PUQR-EZ-strat-constraints,HLT:(2024)EZConstrained,FH(2025):EZMeanField,TTZ(2025)EZcosumptionConstraint,TTZZ(2025):EZ-leverage,FTZ(2026):SPA-EZ-NonMarkov}. These results show that the finite-horizon problem is comparatively well developed,
but they do not provide a general infinite-horizon verification theorem for incomplete markets with stochastic investment opportunities.



The infinite-horizon literature is far less developed. Existing results work in a Black-Scholes-Merton market \citep{MMKS:MF(2020)EZTransactionCosts}, in the regime $\theta<0$ with bounded model coefficients \citep{Dang(2021):Thesis,Shigeta(2026):FS-EZ}, or in the special case of unit EIS, where explicit solutions are available \citep{KGH(2025):CNSNS-EZ}. The works of Herdegen et al. \citep{HHJ(2023):EZutilityI,HHJ(2023):EZutilityII,HHT(2024):EZ-TC,HHJ(2025)FS:EZUtilityProperSol} give a comprehensive account of existence and uniqueness of utility processes for the Epstein-Zin aggregator; however, verification is only treated in a Black-Scholes-Merton market.

To the best of our knowledge, the literature does not contain a complete treatment of the infinite-horizon Epstein-Zin optimal consumption-investment problem in incomplete markets without artificial restrictions on model coefficients or preference parameters. The current work fills that gap in the case $\theta \in (0,1)$. Methodologically, our paper is closest in spirit to the variational
approach of \citet{GLT:VariationalApproach(2025)} for CRRA consumption-investment problems. The Epstein–Zin case changes the problem in two ways: the variational functional becomes non-convex in the relevant regime, and verification must identify a recursive utility process rather than a CRRA value function. Our contribution to the literature is threefold:
\begin{itemize}
    \item[(i)] We provide an existence, uniqueness, regularity, and verification theorem for the Epstein-Zin optimal consumption-investment problem over an infinite horizon with any number of risky assets and a scalar state variable. This theorem gives a general verification result that requires no bespoke model-by-model analysis.
    \item[(ii)] We remove restrictive pointwise assumptions on model coefficients. We impose H\"older regularity and mild joint-integrability conditions on model coefficients, together with positive myopic consumption. These assumptions allow for highly non-linear and non-affine market models.  
    \item[(iii)] We provide a variational characterisation of the value function, yielding a simple numerical scheme that can be used for comparative statics. In addition, the variational characterisation removes the need to solve the HJB equation directly, thereby avoiding technical analysis of second-order semilinear elliptic equations on open domains with no initial conditions.
\end{itemize}

\section{Model} \label{sec. Model and problem formulation}

This section details the market model, the agent's consumption-investment problem, and the core assumptions used throughout the article.
\subsection{Market}
The market is modelled as a multivariate autonomous diffusion in which the risk-free rate, excess return, and volatility depend on an exogenous scalar state variable $(Y_t)_{t\ge 0}$ that cannot be traded. The market dynamics are
\begin{align}
\frac{d S_t^0}{S_t^0} & = r(Y_t) \, dt, \label{eq. Risk-Free asset}\\
 \frac{d S_t^i}{S_t^i} & = r(Y_t) \, dt + dR_t^i, \quad \quad &1 \leq i \leq n, \label{eq. S dynamics}\\  
     dR_t^i &= \mu_i(Y_t) \, dt + \sum_{j=1}^n \sigma_{ij}(Y_t) \, dZ_t^j, \quad \quad &1 \leq i \leq n,\label{eq. R dynamics} \\
dY_t &= b(Y_t) \, dt + a(Y_t) \, dW_t, {\color{white}\sum} \label{eq. Y dynamics}
\end{align}
where $Z=(Z^1,\ldots,Z^n)$ and $W$ are multivariate and scalar Brownian motions, respectively, and $(R_t^i)_{t\geq 0}$ is the cumulative excess-return process for the $i^{th}$ risky asset. The open connected set $E\subseteq\mathbb{R}$ denotes the range of the state variable $Y$. The correlation vector between $Z$ and $W$ is denoted $\rho(Y_t)$, yielding the covariation process
\begin{equation}
    d\langle Z^i,W \rangle_t = \rho_i(Y_t)dt,  \quad  1 \leq i \leq n. \label{eq. Covariation dynamics (R,Y)}
\end{equation}
We impose the condition
\begin{equation} \label{eq. correlation}
    \rho(Y_t)^{\top}\rho(Y_t)=c \in [0,1]
\end{equation}
which, in general, allows the market to be incomplete. To ease notation, the dependence of coefficients on the state variable $Y_t$ is henceforth omitted unless ambiguity arises.
Denote the covariation matrix of excess returns $(R_t^i)_{1\leq i \leq n}$ by $\Sigma = \sigma \sigma^{\top}$ and set $\Upsilon = \sigma \rho a$ as the $n \times 1$ covariation matrix between the risky assets and the state variable.

We denote by $(\Omega,\mathcal{F},\mathcal{F}_t,\mathbb{P}^y)$ the underlying probability space satisfying the usual conditions\footnote{The superscript $y$ is used to emphasise that the construction of the process $(R,Y)$ depends only on the process $Y$.}. In Section \ref{subsec. Assumptions and Notation}, we introduce assumptions on the model coefficients that allow rigorous construction of the market.

\subsection{Consumption-Investment Problem} \label{subsec. Consumption-Investment Problem}

An investor with initial wealth $x$ dynamically chooses consumption and investment. Denote by $(\pi_t)_{t\geq 0}$ the vector of proportions of wealth invested in the risky assets and by $(l_t)_{t\geq 0}$ the consumption-to-wealth ratio (i.e., the agent's consumption rate is $c_t=X_t \,l_t$). The budget equation is
\begin{align} \label{eq. Wealth Process Dynamics}
    \frac{d X_t^{\pi,l}}{X_t^{\pi,l}} & = r \, dt + \pi_t^\top \, dR_t - l_t \, dt .
\end{align}
For a fixed consumption stream $(c_t)_{t \geq 0}$, the associated utility process is a solution to the equation:

\begin{equation} \label{eq. discounted Utility Process}
    V_t^l=\EE\left[\int_t^{\infty}e^{-\beta (s-t)}\frac{(X_s\,l_s)^{1-\delta}}{1-\delta}\left((1-\gamma)V^l_s\right)^{1-\frac{1}{\theta}}ds \,\Big |\, \mathcal{F}_t \right].
\end{equation}

Here $\gamma \in \mathbb{R}^+ \setminus \{1\}$ denotes the agent's risk aversion, $\delta \in \mathbb{R}^+ \setminus \{1\}$ denotes the agent's intertemporal substitution aversion, and $\theta \coloneqq \frac{1-\gamma}{1-\delta}$. As shown in \citet[Remark 2.3]{HHJ(2023):EZutilityII}, the discount factor $e^{-\beta t}$ can be eliminated via a change of numéraire. In particular, the optimal consumption problem with discounting is equivalent to the optimal consumption problem without discounting but with a modified risk-free rate $\tilde{r}=r-\frac{\beta}{1-\delta}$.\footnote{In the papers of Herdegen et al. \citep{HHJ(2023):EZutilityI,HHJ(2023):EZutilityII}, the authors use the convention that $\mu$ represents the drift of the risky assets, while in the current work $\mu$ is the \emph{excess return} above the risk-free rate and hence does not need to be modified.} For simplicity, we henceforth set $\beta=0$ without loss of generality. With this convention, the equation for the utility process generated by a policy pair $(\pi,l)$ is
\begin{equation} \label{eq. Utility Process}
    V_t^{\pi,l}=\EE\left[\int_t^{\infty}\frac{(X_s^{\pi,l}\,l_s)^{1-\delta}}{1-\delta}\left((1-\gamma)V^{\pi,l}_s\right)^{1-\frac{1}{\theta}}ds \,\Big |\, \mathcal{F}_t \right].
\end{equation}
This formulation corresponds to the Epstein-Zin aggregator defined by
\begin{equation} \label{eq. EZ aggregator}
    f(c,v)\coloneqq\frac{c^{1-\delta}}{1-\delta}\left((1-\gamma)v\right)^{1-\frac{1}{\theta}}.
\end{equation}
This discounted formulation was introduced in \citet{HHJ(2023):EZutilityI} as an alternative to the classical difference-form aggregator, which appears in essentially all earlier works; see, for instance, \citep{KSS:FS(2013)EZ,KSS:FS(2017)EZ,Xing:EZConsumption(2017)}. When $\theta>0$, any utility process that solves the discounted Epstein-Zin BSDE also solves the corresponding difference-form BSDE. This equivalence is no longer true when $\theta<0$ and is discussed in detail in \citep{HHJ(2023):EZutilityI,Shigeta(2026):FS-EZ}.

For a fixed pair of policies $(\pi,l)$, existence and uniqueness of a solution to \eqref{eq. Utility Process} depend heavily on the parameters $\gamma$ and $\delta$. With this in mind, we introduce a standing assumption that defines the parameter regime used throughout the paper.
\begin{sassumption} \label{assum. Parameter regime} 
    $\theta \in (0,1)$.
    \nihil{
    $\left(1-\left(1-\frac{1}{\gamma}\right)\rho^{\top}\rho\right)\delta\theta\leq 1$, and $\left(1-\left(1-\frac{1}{\gamma}\right)\rho^{\top}\rho\right) \neq 0$. 
    }
\end{sassumption}
\nihil{
It was observed in \citet{HHJ(2023):EZutilityI} that the classical formulation and primal approach to the Epstein-Zin optimal consumption problem is economically ill-founded when $\theta<0$. The authors construct utility processes exhibiting bubble-like behaviour, where the agent earns utility not from immediate consumption, but from anticipated future consumption that is never realised. We provide a rigorous argument supporting this claim by illustrating that any bounded solution to the associated HJB equation \emph{does not} identify the value function when $\theta<0$. This finding provides further support for the alternative formulation proposed in \citet{Shigeta(2026):FS-EZ}, which implements a monotone transformation of the utility process to avoid bubble-like behaviour when $\theta<0$.
}

The assumption $\theta \in (0,1)$ makes the general market-free analysis in \citet{HHJ(2023):EZutilityII} applicable, ensuring existence and uniqueness of a generalised utility process \citep{HHJ(2023):EZutilityII} for any non-negative progressively measurable consumption stream. Indeed, if $\theta>1$, solutions to \eqref{eq. Utility Process} are no longer unique. Non-uniqueness when $\theta>1$ was studied extensively by \citet{HHJ(2025)FS:EZUtilityProperSol}, where the authors introduced the notion of ``\emph{proper}'' utility processes; the general theory was then applied to solve the optimal consumption problem in a Black-Scholes-Merton market. The case $\theta<0$ was studied in \citep{Dang(2021):Thesis,Shigeta(2026):FS-EZ}.

Next we introduce the class of admissible strategies.

\begin{definition} \label{def. admissable policy}
    The set of admissible policies, denoted by $\mathcal{A}$, consists of $(\mathcal{F}_t)_{t\geq0}$-adapted processes $(\pi,l)$ such that:
    \begin{enumerate}
        \item $\pi$ is integrable with respect to $R$.
        \item $l_t\geq0$ almost surely for $t\geq0$.
        \item There exists a unique strong solution $X^{\pi,l}$ to \eqref{eq. Wealth Process Dynamics}.
        \item $c_t=X_t^{\pi,l}l_t$ is progressively measurable.
    \end{enumerate}
\end{definition}
The dynamics \eqref{eq. Wealth Process Dynamics} automatically ensure $X_t^{\pi,l}\geq0$ almost surely; hence non-negativity of the associated consumption stream is guaranteed. Thus, for any policy pair $(\pi,l)\in  \mathcal{A}$, there exists a unique generalised utility process $V^{\pi,l}$ satisfying \eqref{eq. Utility Process}. The agent's problem is to find
\begin{equation} \label{eq. agent's goal}
    \hat{V}(x,y) =\sup_{(\pi,l)\in \mathcal{A}}V_0^{\pi,l}.
\end{equation}

With the optimisation problem now formulated, we introduce some mild assumptions that are used in all subsequent analysis.

\subsection{Assumptions and Notation} \label{subsec. Assumptions and Notation}
 $C^{k,\alpha}(E,\mathbb{R}^{n\times m})$ denotes the space of $\mathbb{R}^{n\times m}$-valued functions on $E$ that are $k$ times continuously differentiable and whose $k^{th}$-order partial derivatives are $\alpha$-H\"older continuous for some $\alpha \in (0,1)$.
\begin{assumption} \label{assum. Market model is well posed} \textup{(Well-posedness condition)} We assume the following:
\begin{itemize}
    \item[(i)]    For some $\alpha \in (0,1)$, $r \in C^{1,\alpha}(E;\mathbb{R})$, $\mu \in C^{1,\alpha}(E;\mathbb{R}^n)$, $b \in C^{1,\alpha}(E;\mathbb{R})$, $a^2 \in C^{2,\alpha}(E;\mathbb{R})$, $\Sigma \in C^{2,\alpha}(E;\mathbb{R}^{n \times n})$ and $\Upsilon \in C^{2,\alpha}(E;\mathbb{R}^n)$. For all $y \in E$, $\Sigma(y)$ is strictly positive definite and $a^2(y)>0$.
    \item[(ii)] There exists a unique solution to the martingale problem on $\mathbb{R}^{n} \times E$ associated with
            \begin{align*}
\hat{L} &= \frac{1}{2} \sum_{i=1}^{n+1}\sum_{j=1}^{n+1} \hat{A}_{i,j}(\cdot) \frac{\partial^2}{\partial x_i \partial x_j} + \sum_{i=1}^{n+1} \hat{b}_i(\cdot) \frac{\partial}{\partial x_i} 
\end{align*} 
where $\hat{A} = \begin{pmatrix} 
\Sigma & \Upsilon \\
\Upsilon^\top & a^2
\end{pmatrix}$ and
$\hat{b} = \begin{pmatrix} 
\mu\\
b
\end{pmatrix}$.
\end{itemize}
\end{assumption}

Assumption \ref{assum. Market model is well posed} is sufficient to construct an $\mathbb{R}^n \times E$-valued process $(R,Y)$ with continuous trajectories that satisfies the market dynamics \eqref{eq. R dynamics}, \eqref{eq. Y dynamics} and \eqref{eq. Covariation dynamics (R,Y)}. The rigorous construction is outlined in detail in \citep{GR:Portfolios&RiskPremia(2012),GW:ConsumptionIncompleteMarkets(2020),GLT:VariationalApproach(2025)}.

We rule out arbitrage by assuming the existence of a martingale measure:

\begin{assumption} \label{assum. Market is arbitrage free}\textup{(Arbitrage-free market)} There exists a probability measure $\mathbb{Q}^y$ such that $\mathbb{Q}^y|_{\mathcal{F}_t}$ and $\mathbb{P}^y|_{\mathcal{F}_t}$  are equivalent for every $t \in \mathbb{R}_+$ and $S/S^0$ is a $\mathbb{Q}^y$-local martingale.
\end{assumption}

 Next, we impose the same assumptions used in \citet{GLT:VariationalApproach(2025)}.

\begin{assumption} \label{assum. kappa is strictly positive}  \textup{(Positive myopic consumption)}
\noindent $\inf_{y \in E}\kappa (y)>0$, where
\begin{equation} \label{eq. Merton consumption kappa}
     E \ni  y \mapsto \kappa(y)=\left(1-\frac{1}{\delta}\right)\left(r(y) + \frac{\mu(y)^\top\Sigma^{-1}(y)\mu(y)}{2\gamma}\right).
\end{equation}
\end{assumption}
The function $\kappa$ is the optimal consumption ratio if investment opportunities were constant (i.e. if the state variable $Y_t$ were constant). Henceforth we refer to $\kappa$ as the \emph{myopic} consumption ratio. In a market with constant investment opportunities, strict positivity of $\kappa$ is a necessary condition for well-posedness; see \citep{HHJ(2023):EZutilityII}.

Next, we assume the state variable admits a stationary density (denoted $\eta$) and is recurrent under an equivalent probability measure. The dynamics under this probability measure are
\begin{equation} \label{eq. shifted state variable}
    d\tilde{Y}_t=\tilde{b}(\tilde{Y}_t)dt+a(\tilde{Y}_t)d\tilde{W}_t
\end{equation}
where
\begin{equation} \label{eq. tilde b}
    \tilde{b} : E \to \mathbb{R},\, y \mapsto  \,b(y)- \left(1-\frac{1}{\gamma}\right)\left(\Upsilon(y)^\top\Sigma^{-1}(y)\mu(y)\right).
\end{equation}

\begin{assumption}(Long-run stationarity) \label{assum. eta is a density} For any $x_0 \in E$, the mapping
\begin{equation} \label{eq. 1-d eta}
    \eta \colon E \to \mathbb{R}_+ ,\, y  \mapsto \frac{1}{ a(y)^2}\exp{\left(2\int_{x_0}^y \frac{\tilde{b}(u)}{\, a(u)^2}du\right)},
\end{equation}
is integrable and hence normalises to a probability density on $E$.
\end{assumption}
Recall that, for an arbitrary $x_0 \in E$,
\begin{equation} \label{eq. scale function of shifted state variable}
   E \ni  y \mapsto p(y)\coloneqq\int_{x_0}^y \exp\left(-2\int_{x_0}^u \frac{\tilde{b}(s)}{a(s)^2}ds\right)du,
\end{equation}
is the scale function of \eqref{eq. shifted state variable}. 
\begin{assumption}[Recurrence] \label{assum. Recurrence}
The scale function $p$ of \eqref{eq. shifted state variable}, defined in \eqref{eq. scale function of shifted state variable}, satisfies
    \begin{equation*}
        \lim_{y \to \partial E_+}p(y)=+\infty,
        \qquad
        \lim_{y \to \partial E_-}p(y)=-\infty;
    \end{equation*}
    where $\partial E_+,\partial E_{-}$ are the right and left endpoints of $E$, respectively.
\end{assumption}

\begin{remark}
    We do not assume that the state variable admits a stationary density under the physical probability measure $\mathbb{P}^y$. We only require the existence of an invariant density under the distorted measure obtained by shifting the drift \eqref{eq. tilde b}.
\end{remark}

\begin{remark}
    We emphasise that Assumptions \ref{assum. Market model is well posed}-\ref{assum. Recurrence} are exactly the same as those in \citet{GLT:VariationalApproach(2025)}, where the authors study an infinite-horizon optimal consumption problem under CRRA preferences. Hence, the results in this article generalise and extend the arguments in \citep{GLT:VariationalApproach(2025)} to cover Epstein-Zin preferences in incomplete markets without additional assumptions.

    When the market is incomplete, Theorem 5.2 in \citet{GLT:VariationalApproach(2025)} is restricted to the case $\gamma \in (0,1)$. With minor modifications to the proofs, one can extend this result to the regime $\gamma>1$; this extension is contained as a special case in the present article and is obtained by setting $\theta=1$ (and hence $\delta=\gamma$).
    
\end{remark}
\subsection{Variational Problem} \label{subsec. Variational Problem}
We follow the approach of \citet{GLT:VariationalApproach(2025)} and characterise the value function as the solution of a variational problem. Consider the functional

\begin{equation} \label{eq. Functional I}
    I: g \mapsto I(g) \coloneqq \int_E \left(\frac{\nu\, a^2(y)(g'(y))^2}{2} +\kappa(y) g(y)^2 -\frac{2g(y)^{2-\nu}}{(2-\nu)}\right)\eta(y)dy,
\end{equation}
where 
\begin{equation} \label{eq. nu power non-linearity}
    \nu \coloneqq \frac{\gamma}{(\gamma+(1-\gamma)\rho^{\top}\rho)\delta\theta}.
\end{equation}
\begin{remark}
    $\theta \in (0,1)$ implies $\nu >0$.
\end{remark}
We minimise \eqref{eq. Functional I} in the positive cone of the weighted Sobolev space
\begin{equation} \label{def. weighted sobolev space}
     H(E;\RR) \coloneqq \left\{u \in L^1_{loc}(E;\mathbb{R}): \int_E |u(y)|^2\eta(y) dy+\int_E |u'(y)|^2 a^2(y) \eta(y) dy < +\infty\right\}.
\end{equation}
 Denote the positive cone of $H(E)$ by $H(E)_+=\{u\in H(E): u \geq0 \,\, a.e.\}$. With this notation, the variational problem is
\begin{equation} \label{eq. 1-d minisation problem}
   \inf_{g \in H(E)_+}  I(g).
\end{equation}
\begin{theorem}\label{thm. Existence uniqueness regularity and boundedness solution to variational problem}
There exists a solution $g\in C^2(E;\mathbb{R})$ to the minimisation problem \eqref{eq. 1-d minisation problem} satisfying
$0<g\leq M$ on $E$ for some $M>0$.
\end{theorem}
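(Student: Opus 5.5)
The plan is to use the direct method of the calculus of variations on $H(E)_+$. First, truncation shows that minimisers can be assumed bounded. Next, a one-sided variation gives a supersolution inequality, which yields strict positivity. Finally, elliptic regularity on compact subintervals gives $C^2$. Throughout I write $\Phi(s)=\kappa s^2-\frac{2s^{2-\nu}}{2-\nu}$ for the potential. I distinguish $\nu\in(0,2)$, where the nonlinear term is negative and subquadratic, from $\nu>2$, where it equals $+\frac{2}{\nu-2}s^{-(\nu-2)}$, which is positive and singular at $0$. The case $\nu=2$ is read as $\kappa s^2-2\log s$ and is treated like $\nu>2$.

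\emph{Existence.} Since $\eta\,dy$ is a probability measure, Jensen/H\"older gives $\int_E g^{2-\nu}\eta\le(\int_E g^2\eta)^{(2-\nu)/2}$ for $\nu<2$. Combined with $\inf\kappa>0$ (Assumption \ref{assum. kappa is strictly positive}), this yields coercivity:
\[
I(g)\ge \tfrac{\nu}{2}\|a g'\|^2_{L^2(\eta)}+\underline\kappa\|g\|^2_{L^2(\eta)}-C\|g\|^{2-\nu}_{L^2(\eta)} .
\]
For $\nu\ge2$, coercivity is immediate because the extra term is nonnegative (or bounded below up to a log correction). Also $\inf I<+\infty$: a small positive constant $c$ lies in $H(E)_+$ and has finite energy.

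Take a minimising sequence $g_k$. It is bounded in $H(E)$, so I extract $g_k\rightharpoonup g$ weakly in $H(E)$. On each compact subinterval, $a^2\eta$ is bounded away from zero, so Rellich gives local uniform convergence; a diagonal argument then gives pointwise convergence a.e.\ and $g\ge0$. The quadratic part of $I$ is convex and continuous, hence weakly lower semicontinuous. For $\nu<2$, the term $\int g_k^{2-\nu}\eta$ converges by Vitali, since $g_k^{2-\nu}$ is bounded in $L^{2/(2-\nu)}(\eta)$ and hence uniformly integrable. For $\nu>2$, that term is lower semicontinuous by Fatou. Thus $g$ is a minimiser.

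\emph{Upper bound.} $\Phi'(s)=2\kappa s-2s^{1-\nu}>0$ whenever $s>M:=(\inf\kappa)^{-1/\nu}$. Replacing $g$ by $\min(g,M)$ therefore does not increase the gradient term and strictly decreases the potential on $\{g>M\}$. Hence every minimiser satisfies $g\le M$ a.e.

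\emph{Positivity.} Let $\varphi\ge0$ be smooth with compact support and $\varepsilon>0$. Since $s\mapsto -\frac{2s^{2-\nu}}{2-\nu}$ is nonincreasing, $0\le I(g+\varepsilon\varphi)-I(g)$ is bounded above by the change in the quadratic part alone. Dividing by $\varepsilon$ and letting $\varepsilon\to0$ gives
\[
\int_E\big(\nu a^2 g'\varphi'+2\kappa g\varphi\big)\eta\,dy\ \ge0 .
\]
So $g$ is a nonnegative weak supersolution of the linear uniformly elliptic operator $u\mapsto-\nu(a^2\eta u')'+2\kappa\eta u$ on compact subintervals. The weak Harnack inequality (equivalently, the strong minimum principle in one dimension) then gives that either $g\equiv0$ or $\operatorname{ess\,inf}_K g>0$ on every compact $K\subset E$.

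The alternative $g\equiv0$ is excluded for $\nu<2$ because $I(c)=\int_E\big(\kappa c^2-\frac{2c^{2-\nu}}{2-\nu}\big)\eta<0=I(0)$ for small $c>0$. For $\nu\ge2$ it is excluded because $I(0)=+\infty$.

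\emph{Regularity.} Once $g$ is locally bounded away from $0$ and $\infty$, two-sided variations are allowed. They give the weak Euler--Lagrange equation
\[
\nu\,(a^2\eta g')'=2\eta\big(\kappa g-g^{1-\nu}\big)\qquad\text{on } E.
\]
In one dimension $H^1_{loc}\subset C^{0,1/2}_{loc}$, so the right-hand side is locally H\"older. The coefficients $a^2\eta$ are $C^{1,\alpha}$ by Assumption \ref{assum. Market model is well posed}, and Schauder estimates on compact intervals give $g\in C^{2,\alpha}_{loc}(E)\subset C^2(E)$. This continuous representative satisfies $0<g\le M$ pointwise.

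\emph{Main obstacle.} I expect positivity to be the crux. For $\nu>1$ the derivative $g^{1-\nu}$ blows up at zero, so the Euler--Lagrange equation cannot be derived before positivity is known. The one-sided supersolution argument above is designed to get around this. The remaining care is in applying weak Harnack with the weight $a^2\eta$, which is only locally nondegenerate, and I would handle this by working interval by interval.
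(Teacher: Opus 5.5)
Your skeleton matches the paper's. Both use the direct method with coercivity, weak compactness in $H(E)$, Vitali for $\int g^{2-\nu}\eta$ (the paper's Lemma A.1), and truncation at $M=\bar\kappa^{-1/\nu}$ via monotonicity of $q_y$ on $[\kappa(y)^{-1/\nu},\infty)$.

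\textbf{Where you differ.} The differences are scope and the positivity step. The paper itself proves only the non-convex case $\nu\in(0,1)$. It imports the case $\nu\ge1$, the isolated-zero case, and $C^2$ regularity from \citet{GLT:VariationalApproach(2025)}, and it rules out vanishing on an interval with the bump computation $A\varepsilon^2-B\varepsilon^{2-\nu}<0$. Your one-sided variation uses only that $s\mapsto-\tfrac{2s^{2-\nu}}{2-\nu}$ is nonincreasing. It makes $g$ a nonnegative weak supersolution of $-\nu(a^2\eta u')'+2\kappa\eta u$, and the strong minimum principle then excludes isolated zeros as well. This works for every $\nu$ without touching the singular term $g^{1-\nu}$. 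The Euler--Lagrange equation and $C^2$ regularity then follow elementarily in one dimension. The result is a self-contained argument that covers what the paper delegates; Jensen versus Young for coercivity is inessential.

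\textbf{Point 1 (constants may have infinite energy).} The claim that a small positive constant has finite energy needs $\kappa\in L^1(\eta)$, and this is not assumed. The rate $r$ enters $\kappa$ but not $\tilde b$ or $a$, hence not $\eta$. So $\kappa$ can be made non-integrable without violating any assumption, and then $I(c)=+\infty$.
\begin{itemize}
\item For $\nu<2$ this is easy to repair. Finiteness of $\inf I$ comes from $I(0)=0$. To exclude $g\equiv0$, compare with $I(\varepsilon\varphi)=A\varepsilon^2-B\varepsilon^{2-\nu}<0$ for a compactly supported bump $\varphi$, which is exactly the paper's computation.
\item For $\nu>2$ the problem is real. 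Minimising $q_y$ pointwise gives $I(g)\ge\frac{\nu}{\nu-2}\int_E\kappa^{1-2/\nu}\eta\,dy$ for every $g\in H(E)_+$; the analogous bound at $\nu=2$ is $\int_E(1+\log\kappa)\eta\,dy$. So $\inf I<\infty$ is a genuine extra integrability hypothesis that the stated assumptions do not supply. Without it, your truncation and positivity steps, which subtract $I(g)$, are meaningless.
\item You must state this hypothesis explicitly. The paper never faces the issue because it cites the reference for $\nu\ge1$.
\end{itemize}

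\textbf{Point 2 (continuity of the quadratic part).} The quadratic part $\int_E(\tfrac{\nu}{2}a^2(g')^2+\kappa g^2)\eta$ is not continuous on $H(E)$ when $\kappa$ is unbounded. It is convex and strongly lower semicontinuous by Fatou, which is all you need for weak lower semicontinuity.
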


Theorem \ref{thm. Existence uniqueness regularity and boundedness solution to variational problem} is similar to Theorem 5.2 in \citet{GLT:VariationalApproach(2025)}, with an important distinction: the functional \eqref{eq. Functional I} is not convex for $\nu \in (0,1)$, and hence uniqueness is not guaranteed a priori.
We observe that convexity is unnecessary for the existence of a minimiser. Standard estimates and compact embedding arguments yield coercivity and weak sequential lower semicontinuity of \eqref{eq. Functional I}. Inspection of the proof of Theorem 5.2 in \citet{GLT:VariationalApproach(2025)} shows that neither convexity nor uniqueness is required to prove boundedness and $C^2$ regularity of a minimiser. For completeness, we state the following lemma.

\begin{lemma} \label{lem. any minimiser is bounded and C^2}
    Let $g$ be a solution to the minimisation problem \eqref{eq. 1-d minisation problem}. Then $g \in C^2(E;\mathbb{R})$ and $0<g\leq M$ on $E$ for some $M>0$.
\end{lemma}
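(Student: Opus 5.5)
We argue directly from minimality. We never use convexity or uniqueness, only that $g$ minimises $I$ over $H(E)_+$ and hence over every admissible one-sided variation. The steps are: (a) an upper bound by truncation, (b) the weak Euler--Lagrange equation, (c) local regularity, and (d) strict positivity.

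\emph{Boundedness.} Set $M$ equal to the level at which the pointwise potential $\phi(y,s)=\kappa(y)s^2-\frac{2s^{2-\nu}}{2-\nu}$ is increasing in $s$, uniformly in $y$. Since $\partial_s\phi=2\kappa s-2s^{1-\nu}$, the potential is increasing for $s^{\nu}\ge 1/\kappa(y)$. By Assumption \ref{assum. kappa is strictly positive}, the choice $M=(\inf_E\kappa)^{-1/\nu}$ works.

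Now compare $g$ with the truncation $g\wedge M\in H(E)_+$. Truncation can only reduce the gradient term, because $(g\wedge M)'=g'\mathbf 1_{\{g<M\}}$. It strictly reduces the potential term on $\{g>M\}$. Therefore $I(g\wedge M)\le I(g)$, with strict inequality unless $g\le M$ a.e. Minimality then forces $0\le g\le M$ a.e. If $\phi$ is only increasing for $s\ge M$, truncating above $M$ still gives the same conclusion.

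\emph{Euler--Lagrange equation.} Take test functions $\varphi\in C_c^\infty(E)$ and consider $g+\varepsilon\varphi$. This is not automatically nonnegative, so we handle two cases.

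On the open set where $g$ is locally bounded below by a positive constant, two-sided variations are admissible. There we obtain the weak equation
\[
\int_E\big(\nu a^2 g'\varphi'+2\kappa g\varphi-2g^{1-\nu}\varphi\big)\eta\,dy=0 .
\]
Globally, variations with $\varphi\ge0$ are always admissible. These give the inequality $\ge 0$, so $g$ is a weak supersolution. Here we use that $g^{1-\nu}$ is locally integrable, since $0\le g\le M$; this is immediate when $\nu\le1$ and needs a little care when $\nu>1$.

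\emph{Regularity.} Sobolev embedding in one dimension gives $g\in H^1_{loc}\subset C^{0,1/2}_{loc}$, so $g$ is continuous. Write the equation in divergence form,
\[
(\nu a^2\eta g')'=2(\kappa g-g^{1-\nu})\eta .
\]
The right-hand side is locally bounded and continuous wherever $g>0$. Integrating twice, using $a^2\eta\in C^{1,\alpha}$ and $a^2>0$, gives $g\in C^2$ there. Schauder theory can then upgrade this further if needed.

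\emph{Strict positivity.} This is the main obstacle, and we treat it in two steps.

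First, $g\not\equiv0$. For small $t>0$ and a fixed bump $\psi\ge0$, the term $-t^{2-\nu}$ dominates $t^2$ in $I(t\psi)$ (since $2-\nu<2$), so $I(t\psi)<0=I(0)$.

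Second, $g$ has no zeros. Suppose $g(y_0)=0$ for some $y_0\in E$. When $\nu<1$, we rule this out by a perturbation near $y_0$. Adding $\varepsilon\psi$, with $\psi\ge0$ supported near $y_0$, lowers $I$ at order $\varepsilon^{2-\nu}$, or at order $\varepsilon^{1-\nu}$ times the measure of the region where $g$ is small. This beats the order-$\varepsilon$ and order-$\varepsilon^2$ costs from the gradient and $\kappa$ terms, contradicting minimality. For general $\nu$, we would try the following: $g$ is a nonnegative supersolution of $-(\nu a^2\eta g')'+2\kappa\eta g\ge0$ after dropping the positive term $2g^{1-\nu}\eta$. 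The strong minimum principle (Harnack inequality) for this linear operator on connected $E$ then gives $g>0$ everywhere, since $g\not\equiv0$.

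Finally, $g>0$ on all of $E$ makes the Euler--Lagrange equation hold globally. The regularity step then yields $g\in C^2(E)$ with $0<g\le M$, which proves Lemma \ref{lem. any minimiser is bounded and C^2}.
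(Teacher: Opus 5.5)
Your proof is correct. Your upper bound is the paper's Step 2 verbatim: truncation at $M=\bar\kappa^{-1/\nu}$, using that $q_y$ is strictly increasing on $[\kappa(y)^{-1/\nu},\infty)$, which contains $M$.

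The routes differ for positivity and regularity. The paper handles a zero set containing an interval by the $A\varepsilon^2-B\varepsilon^{2-\nu}$ perturbation. It defers isolated zeros, and the $C^2$ regularity, to Steps 4--6 of Theorem 5.2 in \citet{GLT:VariationalApproach(2025)}. You instead use one-sided variations to show that $g$ is a nonnegative weak supersolution of the linear operator $u\mapsto-(\nu a^2\eta u')'+2\kappa\eta u$. The strong minimum principle, together with $g\not\equiv0$ (from $I(t\psi)<0=I(0)$), then gives $g>0$ on all of $E$ at once. You get $g\in C^2$ by integrating the divergence-form equation twice. This works because $a^2\eta$ is, up to a constant, $\exp\bigl(2\int_{x_0}^y\tilde b/a^2\bigr)$, a positive $C^1$ function. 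Your version is self-contained and uniform in $\nu\in(0,2)$ by exploiting one-dimensionality, whereas the paper outsources the delicate cases to the external reference.

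Two refinements.

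First, the ``little care'' for $\nu>1$ is unnecessary. For $\varphi\ge0$ we have $-\tfrac{2}{2-\nu}\int_E\bigl[(g+\varepsilon\varphi)^{2-\nu}-g^{2-\nu}\bigr]\eta\,dy\le0$. So you can drop this term before dividing by $\varepsilon$ and obtain $\int_E(\nu a^2g'\varphi'+2\kappa g\varphi)\eta\,dy\ge0$ directly, which is all the minimum principle needs.

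Second, delete the separate perturbation sketch for isolated zeros when $\nu<1$; it does not work as stated.
\begin{itemize}
\item For $\nu<1$ the map $s\mapsto s^{2-\nu}$ is $C^1$ on $[0,\infty)$. So the gain from the power term is $2\varepsilon\int_E g^{1-\nu}\psi\eta\,dy+o(\varepsilon)$, which is of the same order as the gradient and $\kappa$ costs.
\item The extra contribution from $\{g\lesssim\varepsilon\}$ is only $o(\varepsilon)$, since $\varepsilon^{2-\nu}=o(\varepsilon)$.
\item So there is no order separation. The $O(\varepsilon)$ coefficient is exactly the tested Euler--Lagrange residual, which minimality already forces to be nonnegative for $\psi\ge0$, so no contradiction can come from this expansion.
\end{itemize}
A gain of order $\varepsilon^{2-\nu}$ beats the costs only when $\psi$ is supported where $g\equiv0$, so that the first-order terms vanish; that is the paper's interval case. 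What actually excludes isolated zeros is your minimum-principle argument (in one dimension, a short Gronwall estimate on $\nu a^2\eta g'$), so present that as the proof.
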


Thus, Theorem \ref{thm. Existence uniqueness regularity and boundedness solution to variational problem} and Lemma \ref{lem. any minimiser is bounded and C^2} imply there may exist a family $\{g_\alpha\}_{\alpha \in \mathcal{I}}$ of solutions, each obeying $0<g_\alpha\leq M$ and $g_\alpha \in C^2(E;\mathbb{R})$, where $\mathcal{I}$ is some indexing set.
\begin{remark}
  A key insight of this paper is that an \emph{arbitrary} solution to \eqref{eq. 1-d minisation problem} identifies the value function \eqref{eq. agent's goal}. Hence, uniqueness of the value function necessarily implies that the family of solutions $\{g_\alpha\}_{\alpha \in \mathcal{I}}$ to \eqref{eq. 1-d minisation problem} is a singleton. Thus, uniqueness for \eqref{eq. 1-d minisation problem} is proven a posteriori through verification.
\end{remark}

\section{Main result} \label{sec. Main result}
\begin{theorem} \label{thm. Main Theorem}
    Let $g\in C^2(E;\mathbb{R})$ be a solution to the variational problem \eqref{eq. 1-d minisation problem}. The value function is $\hat{V}(x,y)=\frac{x^{1-\gamma}}{1-\gamma}g(y)^{\frac{\gamma}{\gamma+(1-\gamma)\rho^{\top}\rho}}$ and the optimal policies are
    \begin{align*}
    \pi^{\ast}(y)&=\frac{1}{\gamma}\Sigma^{-1}(y)\mu(y)+\frac{1}{\gamma+(1-\gamma)\rho^{\top}\rho}\Sigma^{-1}(y)\Upsilon(y)\frac{ g'(y)}{g(y)}, \nonumber\\
     l^{\ast}(y)&=g(y)^{-\frac{\gamma}{(\gamma+(1-\gamma)\rho^{\top}\rho)\delta\theta}}.
\end{align*}
\end{theorem}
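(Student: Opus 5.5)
The proof is a verification argument built on the properties of an arbitrary minimiser given by Lemma \ref{lem. any minimiser is bounded and C^2}: $g\in C^2(E)$ and $0<g\le M$. Set $q\coloneqq \frac{\gamma}{\gamma+(1-\gamma)\rho^\top\rho}$ and define the candidate $V(x,y)=\frac{x^{1-\gamma}}{1-\gamma}g(y)^{q}$. The argument has three steps.

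\textbf{Step 1: Euler-Lagrange equation and HJB.} Since $g>0$ and the minimisation is over the cone $H(E)_+$, perturbations $g+\varepsilon\varphi$ with $\varphi\in C_c^\infty(E)$ remain admissible for small $|\varepsilon|$. Hence $g$ is a weak, and by $C^2$ regularity a classical, solution of
\[
\nu\,(a^2\eta g')' = 2\eta\left(\kappa g - g^{1-\nu}\right).
\]
Using $\eta'/\eta=(2\tilde b-(a^2)')/a^2$, this becomes $\frac{\nu}{2}a^2g''+\nu\tilde b g'-\kappa g+g^{1-\nu}=0$. I then check by direct computation that $V$ solves the Epstein-Zin HJB equation
\[
\sup_{\pi,l}\left\{\mathcal{L}^{\pi,l}V+f(lx,V)\right\}=0.
\]
The maximisers of this supremum are exactly $\pi^\ast$ and $l^\ast$. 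The power $q$ is the standard Zariphopoulou-type distortion that removes the $(g')^2/g$ term created by incompleteness. The term $g^{1-\nu}$ arises from the optimised aggregator via $l^\ast=g^{-\nu}$, because $\nu=q/(\delta\theta)$.

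\textbf{Step 2: the candidate is the utility process of $(\pi^\ast,l^\ast)$.} Change to the myopic measure $\tilde{\mathbb P}$, under which $Y$ has drift $\tilde b$ (the Girsanov density uses the myopic portfolio $\Sigma^{-1}\mu/\gamma$). By Assumptions \ref{assum. eta is a density}--\ref{assum. Recurrence}, $\tilde Y$ is a recurrent diffusion staying in $E$, and because $l^\ast$ is continuous the wealth SDE has a unique positive solution. Apply Itô to $V(X^\ast_t,Y_t)$. The HJB equality at the maximisers shows that $V_t+\int_0^t f(c^\ast_s,V_s)\,ds$ is a local martingale. The remaining task is to upgrade this to the BSDE \eqref{eq. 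Utility Process}, which requires a transversality condition $\EE[V(X^\ast_T,Y_T)]\to0$ and control of the stochastic integral. Here I would use that $g\le M$, together with $l^\ast\ge M^{-\nu}>0$, to get exponential decay of wealth-weighted terms under the myopic measure. I would then invoke uniqueness of the generalised utility process from \citet{HHJ(2023):EZutilityII} for $\theta\in(0,1)$, via its characterisation through the limit of the finite-horizon utility, to conclude $V_0^{\pi^\ast,l^\ast}=V(x,y)$.

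\textbf{Step 3: optimality.} This is the main obstacle. For an arbitrary $(\pi,l)\in\mathcal A$, the HJB inequality only gives that $V(X_t,Y_t)+\int_0^t f(c_s,V(X_s,Y_s))\,ds$ is a local supermartingale. The aggregator is evaluated at $V$ rather than at $V^{\pi,l}$, so no direct comparison with the utility process is available, and $V$ need not be bounded below along arbitrary strategies. Following the perturbation approach of \citet{HHJ(2023):EZutilityII}, I would first restrict to strategies whose consumption and utility are comparable to those of the candidate. For a general strategy I would consider the mixture $(1-\varepsilon)c+\varepsilon c^\ast$, or the analogous mixture of wealth processes, which keeps the utility bounded away from degenerate values. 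For these perturbed strategies I would establish a comparison for the Epstein-Zin BSDE between the supersolution $V(X,Y)$ and the subsolution given by the utility process; this uses $\theta\in(0,1)$ and monotonicity of the aggregator in $v$. Finally, I would let $\varepsilon\downarrow0$ and use concavity or monotonicity of utility in consumption to obtain $V_0^{\pi,l}\le V(x,y)$. The delicate part is handling the stochastic state variable in the transversality and localisation estimates. There I would again work under the myopic measure, using boundedness of $g$ and $g^{-1}$ along the relevant paths (locally, via recurrence) in place of the constant coefficients available in the Black-Scholes-Merton setting.
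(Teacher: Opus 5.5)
Your three-step architecture matches the paper's: Euler--Lagrange/HJB; identification of the candidate with the utility process of $(\pi^\ast,l^\ast)$ through a change of measure and the uniqueness theory of \citet{HHJ(2023):EZutilityII}; optimality by perturbing wealth with $\varepsilon$ times the candidate wealth. However, the estimate that carries Step 2 is misstated, and the one that carries Step 3 is missing.

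In Step 2, the measure you name (drift $\tilde b$, density built from $\Sigma^{-1}\mu/\gamma$ alone) does not close the computation. Under it, $(X^\ast_t)^{1-\gamma}g(Y_t)^m$ is not $x^{1-\gamma}g(y)^m e^{-\theta\int_0^t l^\ast_u\,du}$ times that density, because a stochastic exponential in $g'/g$ is left over (from the hedging part of $\pi^\ast$ and from $\ln g(Y_t)$). The right density is $D_t$ of \eqref{eq. Dt change of measure}, built from the full $\hat\pi$ plus $m a^2 g'/g$ in the $W$-direction; under it $Y$ has drift $\tilde b+a^2g'/g$. Lemma \ref{lem. discounted utility of wealth} then gives the exact identity $(X_t^\ast)^{1-\gamma}g(Y_t)^m=x^{1-\gamma}g(y)^m e^{-\theta\int_0^t l^\ast_u\,du}D_t$. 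The BSDE identity and transversality follow by direct computation, since $\mathbb{E}^{\tilde{\mathbb P}}\bigl[\int_t^T\theta l^\ast_u e^{-\theta\int_t^u l^\ast_s\,ds}du\mid\mathcal F_t\bigr]=1-\mathbb{E}^{\tilde{\mathbb P}}\bigl[e^{-\theta\int_t^T l^\ast_u\,du}\mid\mathcal F_t\bigr]$ and $l^\ast\ge M^{-\nu}$; no control of stochastic integrals is needed. What must be checked is that $D$ is a true martingale. That means non-explosion of this \emph{tilted} diffusion, not recurrence of $\tilde Y$. It follows from Assumption \ref{assum. Recurrence} and $g\le M$, because its scale density is that of $\tilde Y$ multiplied by $g(x_0)^2/g(y)^2\ge g(x_0)^2/M^2$.

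Step 3 has a genuine gap. You assert a comparison between the supersolution and the utility process, and propose to handle integrability through local bounds on $g$ and $g^{-1}$ along paths via recurrence. Local bounds give no control of the terminal terms as the horizon tends to infinity. Moreover, the comparison theorem \citep[Theorem 5.8]{HHJ(2023):EZutilityII} needs a $\mathbb{UI}(f,c)$ hypothesis that is not available here. The missing idea is the lower bound the perturbation provides. Since $\hat V$ is increasing in wealth, Lemma \ref{lem. discounted utility of wealth} gives, for $\gamma>1$,
\[
\hat V(X_t+\varepsilon\hat X_t,Y_t)\ \ge\ \hat V(\varepsilon\hat X_t,Y_t)=\hat V(\varepsilon,y)\,e^{-\theta\int_0^t\hat l_u\,du}D_t\ \ge\ \hat V(\varepsilon,y)\,D_t ,
\]
while for $\gamma\in(0,1)$ the bound is simply $0$. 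This martingale lower bound does three things:
\begin{itemize}
\item it justifies Fatou when removing the localisation in the supersolution property;
\item it yields $\liminf_{t}\mathbb{E}[\hat V(Z^\varepsilon_t,Y_t)]\ge0$ via Lemma \ref{lem. Transversality};
\item it replaces $\mathbb{UI}(f,c)$ in the comparison.
\end{itemize}
You also need a case split. For $\gamma\in(0,1)$ the utility process is the minimal supersolution \citep[Theorem 6.5]{HHJ(2023):EZutilityII}, so no comparison is required. For $\gamma>1$ it is only the maximal subsolution, and one needs the tailored Lemma \ref{lem. Comparison lemma}. That lemma is proved by contradiction using $\sigma=\inf\{s\ge\tau:V_{s+}\le\hat V_{s+}\}$, monotonicity of $f$ in $v$, and the bound above.

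Finally, the mixed-consumption detour is unnecessary. Since $f$ is nondecreasing in $c$ and $C^\varepsilon\ge C$, $\hat V(Z^\varepsilon,Y)$ is already a supersolution for the original stream $C$. So the comparison is made with $V^{\pi,l}$ itself, giving $V_0^{\pi,l}\le\hat V(x+\varepsilon,y)$, and $\varepsilon\to0$ is taken only in the explicit function.
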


\begin{corollary} \label{cor. solution to the minimisation problem is unique}
    The minimisation problem \eqref{eq. 1-d minisation problem} has a unique solution in $H(E)_+$.
\end{corollary}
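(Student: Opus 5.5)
Existence of a minimiser is already given by Theorem \ref{thm. Existence uniqueness regularity and boundedness solution to variational problem}, so only uniqueness needs proof. The plan is to derive it a posteriori from Theorem \ref{thm. Main Theorem}, as anticipated in the remark following Lemma \ref{lem. any minimiser is bounded and C^2}. The key point is that the value function $\hat V$ in \eqref{eq. agent's goal} is defined intrinsically as a supremum over $\mathcal{A}$, so it does not depend on any choice of minimiser.

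Let $g_1,g_2\in H(E)_+$ be two solutions of \eqref{eq. 1-d minisation problem}. By Lemma \ref{lem. any minimiser is bounded and C^2}, each $g_i$ lies in $C^2(E;\mathbb{R})$ and satisfies $0<g_i\le M_i$. Hence each $g_i$ meets the hypotheses of Theorem \ref{thm. Main Theorem}, and applying that theorem to each gives, for all $(x,y)\in(0,\infty)\times E$,
\begin{equation*}
\frac{x^{1-\gamma}}{1-\gamma}\,g_1(y)^{p}=\hat V(x,y)=\frac{x^{1-\gamma}}{1-\gamma}\,g_2(y)^{p},\qquad p\coloneqq\frac{\gamma}{\gamma+(1-\gamma)\rho^{\top}\rho}.
\end{equation*}

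To pass from this identity to $g_1=g_2$, fix $x=1$ and divide by $\frac{1}{1-\gamma}\neq0$, which is allowed since $\gamma\neq1$. This leaves $g_1^p=g_2^p$ on $E$. Next we check that $p>0$. Since $\nu>0$ and $\nu=p/(\delta\theta)$ with $\delta>0$ and $\theta\in(0,1)$, we get $p>0$. Directly: $\gamma+(1-\gamma)c=\gamma(1-c)+c>0$ for $c=\rho^\top\rho\in[0,1]$ and $\gamma>0$. With $p>0$ and $g_i>0$, the map $t\mapsto t^p$ is injective on $(0,\infty)$, so $g_1=g_2$ pointwise on $E$, and in particular as elements of $H(E)_+$.

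The main obstacle is not in this corollary; it is entirely contained in Theorem \ref{thm. Main Theorem}, namely the verification that an arbitrary minimiser generates the value function. The only subtlety here is to ensure that every element of $H(E)_+$ attaining the infimum qualifies for that theorem, with no additional regularity or positivity assumed. Lemma \ref{lem. any minimiser is bounded and C^2} supplies exactly this. The a.e.\ representative is identified with its continuous version, so equality holds in $H(E)$.
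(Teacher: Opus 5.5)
Your proposal is correct and follows essentially the same argument as the paper: apply Theorem \ref{thm. Main Theorem} to two minimisers, equate the two expressions for $\hat V$, and conclude from injectivity of $t\mapsto t^m$ on $(0,\infty)$. The paper only notes $m\neq 0$, which already gives injectivity, so your extra check that $m>0$ is harmless but not needed.
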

\nihil{
\begin{corollary} \label{cor. Impossibility theorem}
    Let $m >0$. Assume there exists a classical solution to the HJB equation \eqref{eq. HJB} of the form $V(x,y)=\frac{x^{1-\gamma}}{1-\gamma}g(y)^m$ satisfying $0<g<M$ on $E$ for some $M>0$. Then $V$ can be the value function if and only if $\theta>0$.
\end{corollary}
}

\subsection{The Hamilton-Jacobi-Bellman equation}
In this section we explain the connection between the variational problem \eqref{eq. 1-d minisation problem} and the stochastic control problem \eqref{eq. agent's goal}.
We conjecture the value function
\begin{align} \label{eq. Value function}
    \hat{V}(x, y) = \sup_{(\pi, l) \in \mathcal{A}} \EE\left[\int_0^{\infty}\frac{(X_t\,l_t)^{1-\delta}}{1-\delta}\left((1-\gamma)V^l_t\right)^{1-\frac{1}{\theta}}dt \, \Big | \, X_0 =  x, Y_0 = y\right],
\end{align}
is of the form $\hat{V}(x,y)=\frac{x^{1-\gamma}}{1-\gamma}g(y)^m$ for some $g:E \to \mathbb{R}_+$, a parameter $m \in \mathbb{R}\setminus\{0\}$ to be chosen, and $(x,y) \in \mathbb{R}_+ \times E$.
The resulting HJB equation in terms of $g$ is then
\begin{align} \label{eq. HJB}
    0= r&+\frac{m\, b}{(1-\gamma)}\frac{g'}{g}+\frac{a^2\,m}{2(1-\gamma)}\frac{g''}{g}+\frac{a^2(m-1)m}{2(1-\gamma)}\left(\frac{g'}{g}\right)^2\\ \nonumber
    &+\sup_{\pi,l}\left\{\frac{l^{1-\delta}}{1-\delta}\,g^{-m/\theta}-l+\pi^{\top}\mu-\frac{\gamma}{2}\pi^\top\Sigma \pi+m\,\pi^{\top}\Upsilon\frac{g'}{g}\right\}.
\end{align}
The first-order conditions yield the candidate optimal controls
\begin{align} \label{eq. Candidate optimal controls}
    \hat{\pi}(y)=\frac{1}{\gamma}\Sigma^{-1}(y)\mu(y)+\frac{m}{\gamma}\Sigma^{-1}(y)\Upsilon(y)\frac{ g'(y)}{g(y)}, && \hat{l}(y)=g(y)^{-\frac{m}{\delta \theta}},
\end{align}
and substitution into \eqref{eq. HJB} yields the semilinear equation
\begin{align}
    \frac{m\,a^2}{2\theta \delta}g''&+\left[\frac{m}{\theta \delta}b - \frac{m}{\gamma}\left(1-\frac{1}{\delta}\right)\left(\Upsilon^\top\Sigma^{-1}\mu\right)\right]g'- \left(1-\frac{1}{\delta}\right)\left(r + \frac{\mu^\top\Sigma^{-1}\mu}{2\gamma}\right)g \\ \nonumber
    &-\left[\frac{a^2(1-m)m}{(1-\delta)\delta \theta}-\frac{m^2}{\delta \gamma}\Upsilon^{\top}\Sigma^{-1}\Upsilon\right]\frac{(1-\delta)}{2}\frac{(g')^2}{g}=-g^{1-\frac{m}{\delta \theta}}
\end{align}
We note that if $\gamma=\delta$ (which forces $\theta=1$) and $m=\gamma$, then one recovers the HJB equation in the CRRA regime \citep{GW:ConsumptionIncompleteMarkets(2020)}. Since the state variable is scalar, $\Upsilon^{\top}\Sigma^{-1}\Upsilon=a^2\rho^{\top}\rho$; thus the prefactor in front of the quadratic non-linearity simplifies, and the equation becomes
\begin{align}
    \frac{m\,a^2}{2\theta \delta}g''&+\left[\frac{m}{\theta \delta}b - \frac{m}{\gamma}\left(1-\frac{1}{\delta}\right)\left(\Upsilon^\top\Sigma^{-1}\mu\right)\right]g'- \left(1-\frac{1}{\delta}\right)\left(r + \frac{\mu^\top\Sigma^{-1}\mu}{2\gamma}\right)g \\ \nonumber
    &-\left[\frac{(1-m)m}{(1-\delta)\delta \theta}-\frac{m^2}{\delta \gamma}\rho^{\top}\rho\right]\frac{a^2(1-\delta)}{2}\frac{(g')^2}{g}=-g^{1-\frac{m}{\delta \theta}}.
\end{align}
Equivalently,
\begin{align}
    \frac{m\,a^2}{2\theta \delta}g''&+\frac{m}{\theta \delta}\left[b - \left(1-\frac{1}{\gamma}\right)\left(\Upsilon^\top\Sigma^{-1}\mu\right)\right]g'- \left(1-\frac{1}{\delta}\right)\left(r + \frac{\mu^\top\Sigma^{-1}\mu}{2\gamma}\right)g \\ \nonumber
    &-\left[\frac{(1-m)m}{(1-\delta)\delta \theta}-\frac{m^2}{\delta \gamma}\rho^{\top}\rho\right]\frac{a^2(1-\delta)}{2}\frac{(g')^2}{g}=-g^{1-\frac{m}{\delta \theta}}.
\end{align}
Using \eqref{eq. tilde b}, this becomes
\begin{align}
    \frac{m\,a^2}{2\theta \delta}g''&+\frac{m}{\theta \delta}\tilde{b}\,g'- \kappa \,g -\left[\frac{(1-m)m}{(1-\delta)\delta \theta}-\frac{m^2}{\delta \gamma}\rho^{\top}\rho\right]\frac{a^2(1-\delta)}{2}\frac{(g')^2}{g}=-g^{1-\frac{m}{\delta \theta}}.
\end{align}

If we set $m=\delta \theta$, then the power non-linearity vanishes and the HJB equation has essentially the same form as in the CRRA regime. In the CRRA case, the HJB equation has proven notoriously difficult to analyse; recent progress has been made by \citet{GHH:ConsumptionINvestmentSFM(2025)}, who provide a general theory of second-order semilinear equations with no initial conditions. Instead, we leverage the fact that the state variable is scalar and employ the transformation from \citet{Thaleia:FS(2001)}. Since $\rho^\top\rho$ is constant, the prefactor in front of the quadratic non-linearity is eliminated by fixing $m$ to be the positive solution of a quadratic equation
\begin{equation} \label{eq. quadratic equation in m}
    \frac{(1-m)m}{(1-\delta)\delta \theta}-\frac{m^2}{\delta \gamma}\rho^{\top}\rho=0.
\end{equation}
As $m\neq0$, we fix 
\begin{equation}
    m=\frac{\gamma}{\gamma+(1-\gamma)\rho^{\top}\rho},
\end{equation}
which solves \eqref{eq. quadratic equation in m}. Recalling the definition of $\nu=\frac{m}{\delta \theta}$ yields a compact representation of the HJB equation:
\begin{equation} \label{eq. general HJB final semi-linear form}
    \frac{\nu a^2}{2}g''+\nu\,\tilde{b}\,g'-\kappa \,g=-g^{1-\nu},
\end{equation}
where $\kappa$ and $\tilde{b}$ are defined in \eqref{eq. Merton consumption kappa} and \eqref{eq. tilde b}.
Now consider the minimisation problem \eqref{eq. 1-d minisation problem}. We observe that the associated Euler-Lagrange equation is exactly the HJB equation \eqref{eq. general HJB final semi-linear form}. Hence, a solution of the variational problem is a classical solution to the HJB equation and provides a candidate value function. Theorem \ref{thm. Main Theorem} then ensures that any solution of the variational problem is the value function, thereby proving that the minimisation problem \eqref{eq. 1-d minisation problem} has a unique solution.

\section{Examples} \label{sec. Examples}
In this section we provide examples of market models with nonlinear, unbounded, and non-affine coefficient structures to which Theorem \ref{thm. Main Theorem} applies. For notational simplicity, we use only a single risky asset, even though all previous results hold for an arbitrary number of risky assets. Numerical solutions are obtained using the scheme outlined in \citet{GLT:VariationalApproach(2025)}.

Throughout \emph{this section only}, we assume $\delta>1$, which guarantees that Assumption \ref{assum. kappa is strictly positive} holds provided the risk-free rate is positive. If $\delta>1$, then $\gamma>1$ is required to ensure $\theta>0$. Finally, imposing $\theta < 1$ yields the parameter regime used in this section:
\begin{align} \label{assum. Example parameters}
    1<\gamma<\delta.
\end{align}
For the numerical solution of each model we reintroduce the discounting parameter $\beta>0$, which was absorbed into the interest rate for the theoretical analysis.

\nihil{
To check the admissible parameter regime for $\delta,\gamma,\rho^2$ we first note $\delta>1$ already forces $\gamma>1$ to ensure $\theta>0$. Thus Assumption \ref{assum. Parameter regime} holds if
\begin{align} 
    \gamma>1; && \delta>1; && \frac{\gamma(\delta\gamma-2\delta+1)}{\delta(\gamma-1)^2}\leq\rho^2.
\end{align}
As noted in the model introduction, Assumption \eqref{assum. Parameter regime}, and hence the above condition, implies $\theta \in (0,1)$. Furthermore, $\theta \in (0,1)$ also forces $\gamma<\delta$. This provides a straightforward, non-restrictive set of sufficient conditions to ensure an admissible parameter regime. We note that in a complete market, $\rho^2=1$, the third condition is equivalent to $\gamma<\delta$. Hence for the remainder of this section we assume \eqref{assum. Example parameters}, which guarantees Assumption \ref{assum. Parameter regime} holds.
}
\subsection{Classical models}
We observe that our framework embeds many classical models, including the CIR state variable used in \citet{Xing:EZConsumption(2017)} and the OU excess-return model \citep{KO:RFS(2015)}.
\subsubsection{CIR State Variable}
The first example features a square-root diffusion state variable driving the interest rate, excess returns, and volatility. We follow the parametrisation adopted by \citet{Xing:EZConsumption(2017)}, where the interest rate fluctuates above a baseline $r_0>0$, and the Sharpe ratio is proportional to the square root of the state variable, with proportionality constant $\lambda$:
\begin{align} \label{eq.Heston model}
      r(Y_t)=r_0+&r_1Y_t; \quad dR_t=\sigma\,\lambda\,Y_tdt+\sigma\,\sqrt{Y_t}dB_t, \nonumber\\
     dY_t&=b_0(b_1-Y_t)dt+a\,\sqrt{Y_t}dW_t.
\end{align}
Here $r_0,r_1,\sigma,a,b_0,b_1,\lambda >0$.
In this model,
\begin{align*}
    \kappa(y)=\left(1-\frac{1}{\delta}\right)\left(r_0+r_1\,y+\frac{\lambda^2 y}{2\gamma }\right);&& \tilde{b}(y)=b_0(b_1-y)-\left(1-\frac{1}{\gamma}\right)\rho a \lambda y,
\end{align*}

\begin{align*}
    \eta(y)=\frac{C}{a^2}\, y^{2b_0 b_1/a^2 - 1}\exp\!\left(-\frac{2[b_0 + (1-\frac{1}{\gamma})\rho a\lambda]}{a^2}\,y\right).
\end{align*}
The Feller condition $2b_0b_1>a^2$ ensures that Assumptions \ref{assum. Market model is well posed}-\ref{assum. Market is arbitrage free} hold. Assumption \ref{assum. kappa is strictly positive} holds because $r_0,r_1>0$ and $\delta>1$ (the standing assumption throughout this section). The condition $b_0 + (1-\frac{1}{\gamma})\rho a\lambda>0$ ensures that Assumptions \ref{assum. eta is a density}-\ref{assum. Recurrence} hold, and thus Theorem \ref{thm. Main Theorem} applies. The optimal policies are
\begin{align*}
    \pi^\ast(y)=\frac{\lambda  }{\gamma\,\sigma} + \frac{a\,\rho}{\sigma(\gamma+(1-\gamma)\rho^2)}\frac{g'(y)}{g(y)}; && l^\ast(y)=g(y)^{\frac{-\gamma}{(\gamma+(1-\gamma)\rho^2)\delta \theta}}.
\end{align*}
The numerical solution for this model is shown in Figure \ref{fig:cir_consumption} with $r_1=0$, i.e., with constant interest rates.

We report the optimal consumption ratio for various combinations of $\gamma$ and $\delta$ in Figure \ref{fig:cir_consumption}. In each case, consumption is essentially constant across states, in contrast to the linear myopic policy, which is increasing in the state. For fixed $\gamma$, the consumption ratio decreases as $\delta$ increases.
In particular, the consumption ratio is more sensitive to changes in $\delta$ with fixed $\gamma$ than to changes in $\gamma$ with fixed $\delta$. For $\gamma=2$, the consumption ratio decreases by a factor of approximately $0.93$ as $\delta$ increases from $3$ to $5$. In contrast, with $\delta=5$, the consumption ratio decreases by a factor of approximately $0.99$ as $\gamma$ increases from $2$ to $4$. Hence, changes in risk aversion have a proportionally smaller effect on the optimal consumption ratio than changes in $\delta$. These findings are similar to those in \citet{Xing:EZConsumption(2017)} for the finite-horizon setting.

\begin{figure}[H]
  \begin{minipage}[t]{0.48\textwidth}
    \includegraphics[width=\textwidth, height=5.5cm]{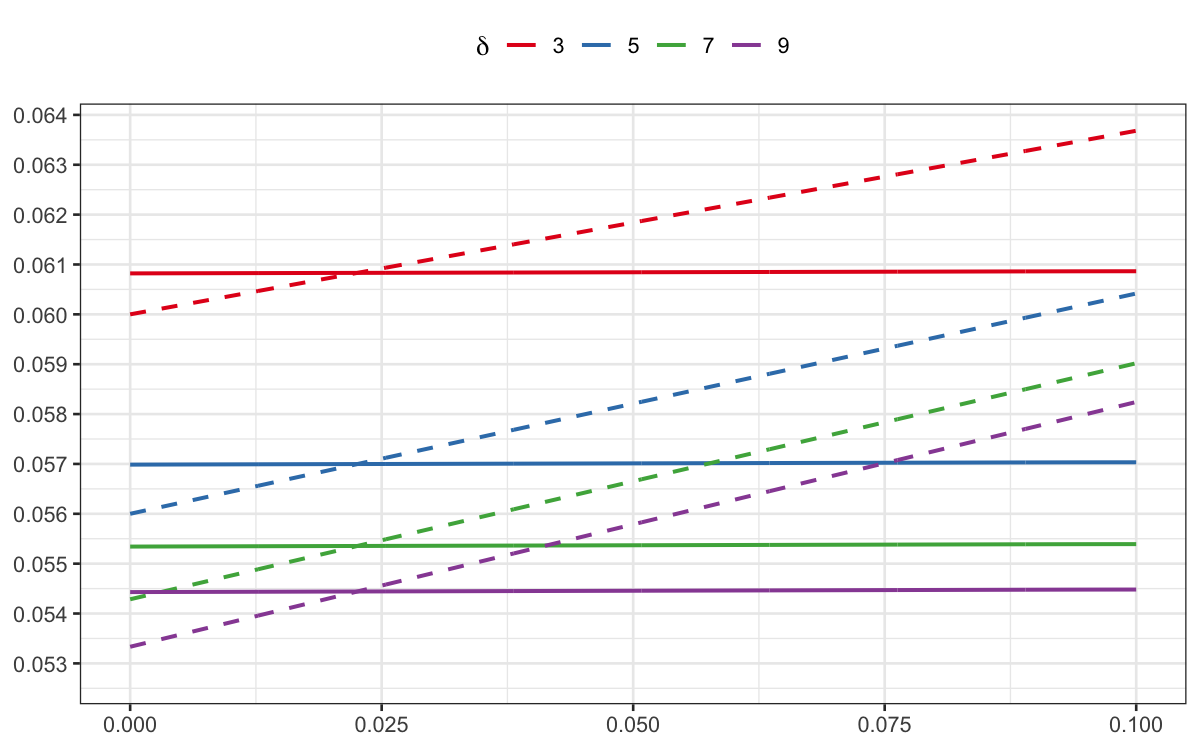}
  \end{minipage}
  \hfill
  \begin{minipage}[t]{0.48\textwidth}
    \includegraphics[width=\textwidth,height=5.5cm]{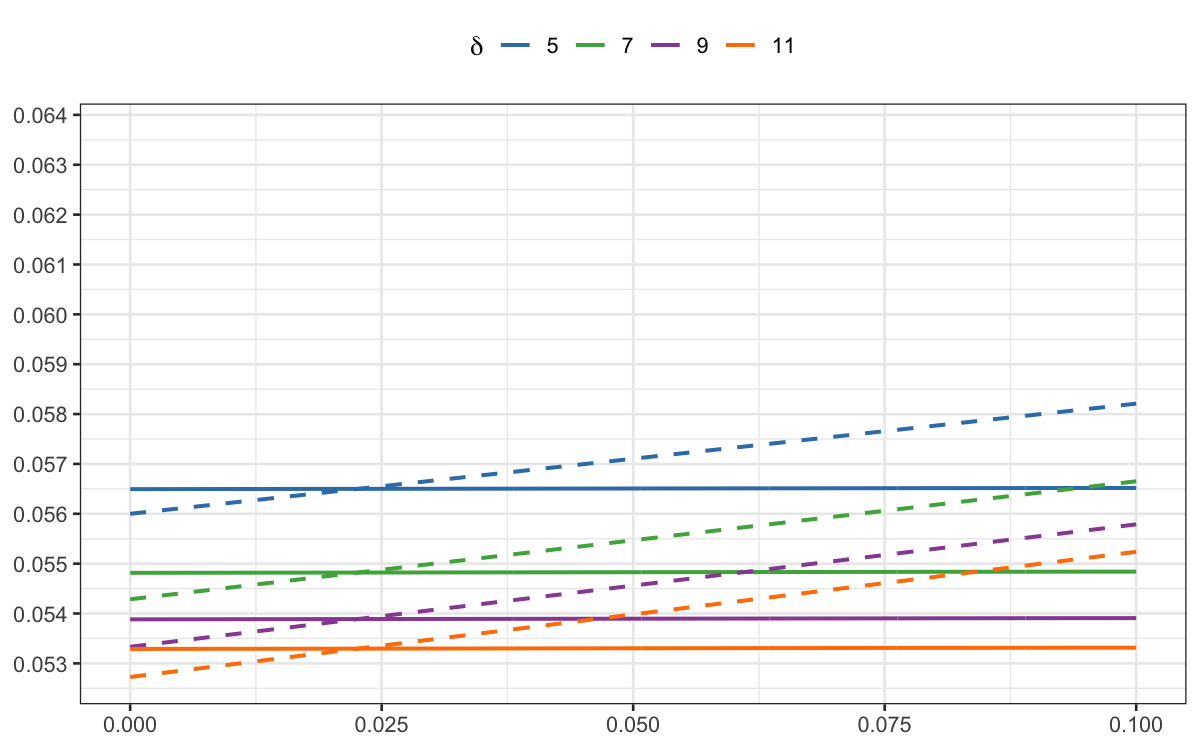}
  \end{minipage}

  \caption{Optimal consumption ratio $l^\ast(y) = g(y)^{-\nu}$ (solid lines) and the myopic
    benchmark $\kappa(y)$ (dashed lines) as a function of the
    CIR state variable $y$ for $\gamma=2$ (left panel) and $\gamma=4$ (right panel).
    Parameters: $r_0=0.05$, $r_1=0$, $\sigma=1$, $\beta=0.08$, $\lambda=0.47$, $b_0=5$, $b_1=0.0225$,
    $a=0.25$, $\rho=-0.5$ taken from \citet{Xing:EZConsumption(2017)}.}
  \label{fig:cir_consumption}
\end{figure}

\subsubsection{Kim-Omberg excess returns}
Next, we present the classical Gaussian excess-return model from \citet{KO:RFS(2015)}. Let $r>0$ and
\begin{align} \label{eq. Kim-Omberg}
      dR_t&=Y_tdt+\sigma\,dB_t, \nonumber\\
     dY_t&=b(\bar{\mu}-Y_t)dt+a dW_t.
\end{align}
In this model, $\bar{\mu},\sigma,b,a >0$,
\begin{align*}
    \kappa(y)=\left(1-\frac{1}{\delta}\right)\left(r+\frac{y^2}{2\gamma \sigma^2}\right);&& \tilde{b}(y)=b\bar{\mu}-y\left(b+\left(1-\frac{1}{\gamma}\right)\left(\frac{\rho a }{\sigma}\right)\right),
\end{align*}

\begin{align*}
    \eta(y)=\frac{C}{a^2}\exp\!\left(-\frac{\lambda y^2}{a^2}+\frac{2b\bar{\mu}\,y}{a^2}\right),
\end{align*}
where $\lambda \coloneqq\left(b+\left(1-\frac{1}{\gamma}\right)\left(\frac{\rho a }{\sigma}\right)\right)>0$. Assumptions \ref{assum. Market model is well posed}-\ref{assum. Market is arbitrage free} hold, while $\delta>1$ ensures that Assumption \ref{assum. kappa is strictly positive} holds. Finally, $\lambda>0$ is sufficient to ensure that Assumptions \ref{assum. eta is a density}-\ref{assum. Recurrence} hold. Hence Theorem \ref{thm. Main Theorem} gives the optimal policies:

\begin{align*}
    \pi^\ast(y)=\frac{y }{\gamma\,\sigma^2} + \frac{a\,\rho}{\sigma(\gamma+(1-\gamma)\rho^2)}\frac{g'(y)}{g(y)}; && l^\ast(y)=g(y)^{\frac{-\gamma}{(\gamma+(1-\gamma)\rho^2)\delta \theta}}.
\end{align*}

\begin{figure}[H]
  \begin{minipage}[t]{0.48\textwidth}
    \includegraphics[width=\textwidth, height=5.5cm]{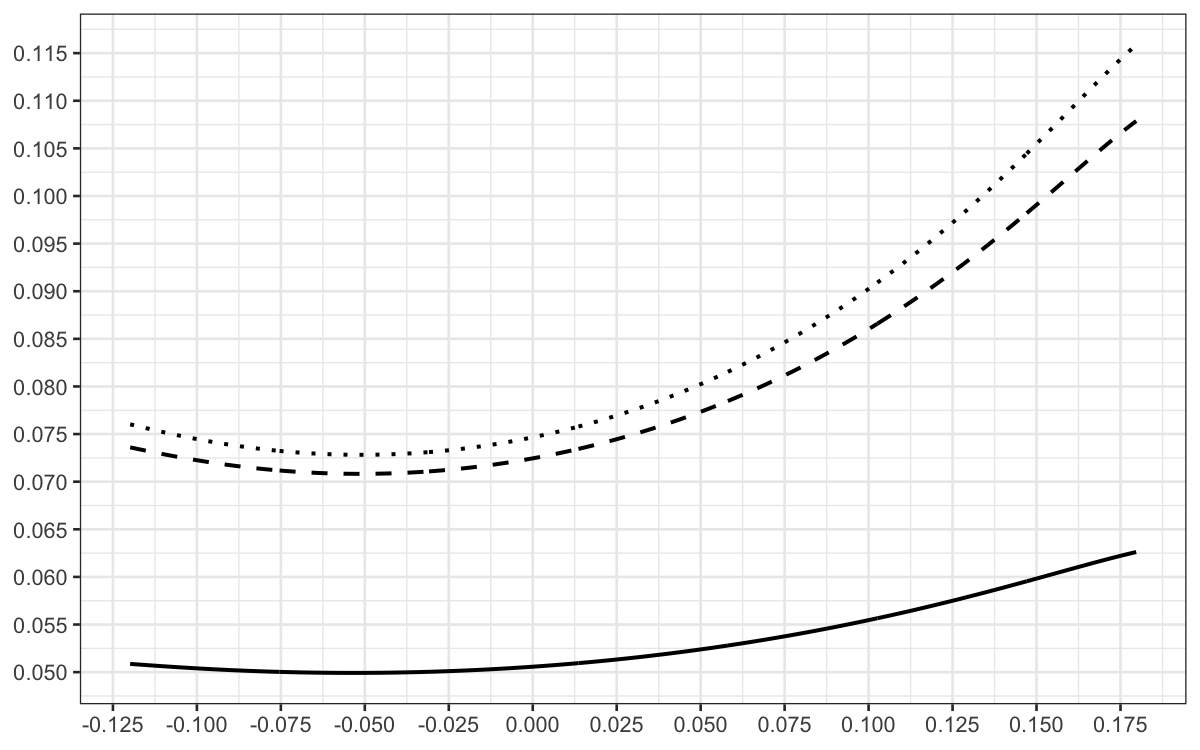}
  \end{minipage}
  \hfill
  \begin{minipage}[t]{0.48\textwidth}
    \includegraphics[width=\textwidth,height=5.5cm]{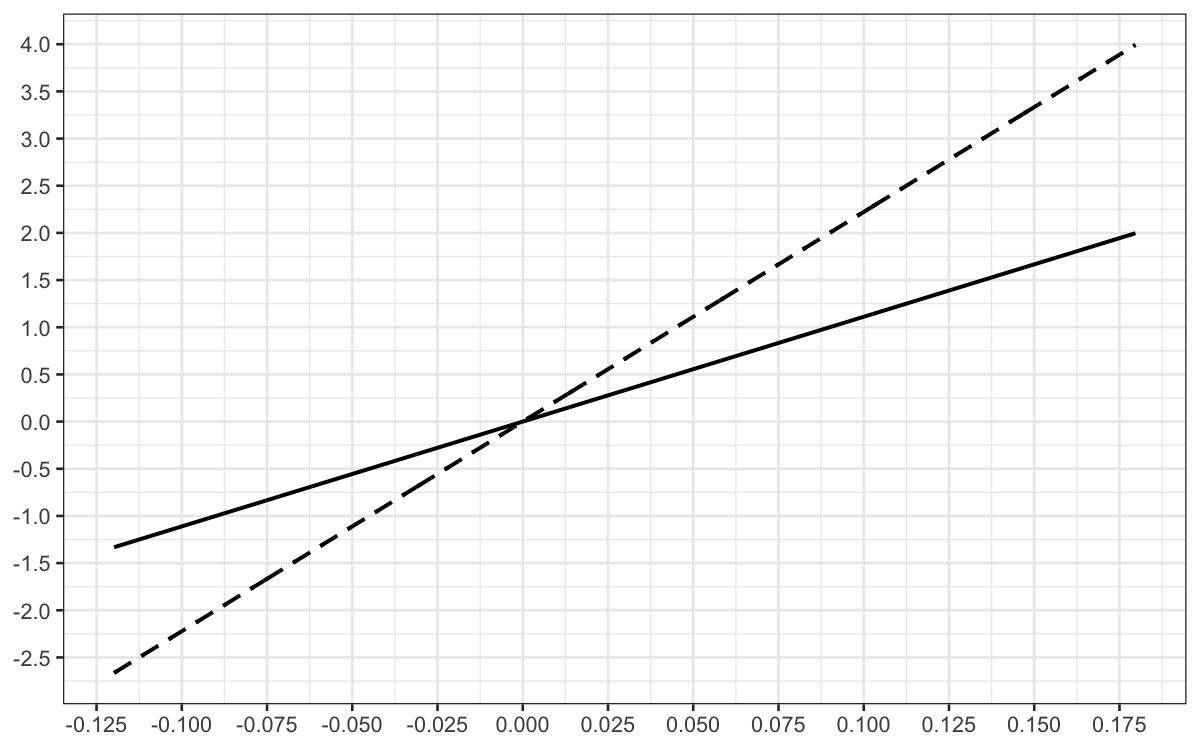}
  \end{minipage}

  \caption{Optimal consumption ratio $l^\ast(y) = g(y)^{-\nu}$ (left panel) and optimal portfolio weight $\pi^\ast(y)$ (right panel) as a function of the state variable $y$ (within a $99\%$ confidence interval for the long-run stationary density of $Y$), for $(\gamma,\delta)=(2,9)$ (dotted lines), $(\gamma,\delta)=(2,5)$ (dashed lines) and $(\gamma,\delta)=(4,5)$ (solid lines). Parameters: $r=0.02$, $\beta=0.05$, $\sigma=0.15$, $b=0.50$, $\bar{\mu}=0.06$,
    $a=0.05$, $\rho=-0.30$.}
  \label{fig:ko_consumption_portfolio}
\end{figure}
In Figure \ref{fig:ko_consumption_portfolio}, we present both the optimal consumption ratio and the optimal portfolio weight. We observe that $\delta$ has no effect on the investment policy; with $\gamma=2$, the investment policies for $\delta=5$ and $\delta=9$ are identical. In contrast to the CIR model, changes in $\gamma$ have a larger relative effect on the optimal consumption ratio than changes in $\delta$. Consumption is more sensitive to the state variable than in the CIR model, but for higher levels of risk aversion ($\gamma=4$), consumption is very stable, ranging between $5\%$ and $6.25\%$ across all states.

\subsection{Fat-Tailed Excess Returns}
We next consider a model with $r>0$:
\begin{align} \label{eq. FT excess returns}
      dR_t&=Y_tdt+\sigma\,dB_t, \nonumber\\
     dY_t&=b(\bar{\mu}-Y_t)^3dt+a\left(1+\frac{(Y_t-\bar{\mu})^2}{\alpha^2}\right)dW_t.
\end{align}
In this model, $\bar{\mu},\sigma,b,a >0$,
\begin{align*}
    \kappa(y)=\left(1-\frac{1}{\delta}\right)\left(r+\frac{y^2}{2\gamma \sigma^2}\right);&& \tilde{b}(y)=b(\bar{\mu}-y)^3-\left(1-\frac{1}{\gamma}\right)\frac{\rho a}{\sigma}\left(1+\frac{(y-\bar{\mu})^2}{\alpha^2}\right)y,
\end{align*}
\begin{align*}
        \eta(y)&=\frac{C}{a^2\left(1+\frac{(y-\bar{\mu})^2}{\alpha^2}\right)^{2+\lambda+\left(1-\frac{1}{\gamma}\right)\frac{\rho \alpha^2}{a\sigma}}}\exp\!\left(\frac{-\lambda}{\left(1+\frac{(y-\bar{\mu})^2}{\alpha^2}\right)}-2\left(1-\frac{1}{\gamma}\right)\frac{\rho\bar{\mu} \alpha}{a\sigma}\arctan\left(\frac{y-\bar{\mu}}{\alpha}\right) \right),\\
\end{align*}
where $\lambda=\frac{b\alpha^4}{a^2}$  with $\lambda+\left(1-\frac{1}{\gamma}\right)\frac{\rho \alpha^2}{a\sigma}>-\frac{3}{2}$, and $C>0$ is a normalising constant.
It is straightforward to check that Assumptions \ref{assum. Market model is well posed}-\ref{assum. Recurrence} hold. The optimal policies are
\begin{align*}
    \pi^\ast(y)=\frac{y }{\gamma\,\sigma^2} + \frac{a\,\rho\,\left(1+\frac{(y-\bar{\mu})^2}{\alpha^2}\right)}{\sigma(\gamma+(1-\gamma)\rho^2)}\frac{g'(y)}{g(y)}; && l^\ast(y)=g(y)^{\frac{-\gamma}{(\gamma+(1-\gamma)\rho^2)\delta \theta}}.
\end{align*}

 To ensure a fair comparison between models \eqref{eq. Kim-Omberg} and \eqref{eq. FT excess returns}, we calibrate the parameters for model \eqref{eq. FT excess returns} by matching the first two moments of the stationary density of the state variable in model \eqref{eq. Kim-Omberg} and by matching the derivative at zero of the autocorrelation function under the stationary distribution. The state variable and stationary density in model \eqref{eq. Kim-Omberg} are
\begin{align*}
    dY_t=0.5(0.06-Y_t)dt+0.05\,dW_t; && \Phi_{OU} \sim \mathcal{N}(0.06,0.05^2).
\end{align*}
Consider the state variable from model \eqref{eq. FT excess returns} and its stationary density:
\begin{align*}
    dY_t=b(\bar{\mu}-Y_t)^3dt+a\left(1+\frac{(Y_t-\bar{\mu})^2}{\alpha^2}\right)dW_t; && \Phi(y) \propto \frac{1}{\left(1+\frac{(y-\bar{\mu})^2}{\alpha^2}\right)^{2+\lambda}}\exp\!\left(\frac{-\lambda}{\left(1+\frac{(y-\bar{\mu})^2}{\alpha^2}\right)} \right).
\end{align*}
We need to fix the parameters $(\bar{\mu},\alpha,a,b)$. We set $\bar{\mu}=0.06$ to match the first moment and fix $\lambda=1$, which determines the tail index.
Using the transformation $z=(y-\bar{\mu})/\alpha$, we see that the stationary density satisfies
\begin{equation*}
    z \mapsto \frac{C}{\left(1+z^2\right)^{3}}\exp\left(\frac{-1}{\left(1+z^2\right)} \right)
\end{equation*}
where $C$ is a normalising constant independent of $\alpha$ ($1/C \approx 0.5218086$). Undoing this transformation then yields the exact form of the stationary density:
\begin{equation*}
        \Phi(y) = \frac{C}{\alpha\left(1+\frac{(y-0.06)^2}{\alpha^2}\right)^{3}}\exp\!\left(\frac{-1}{\left(1+\frac{(y-0.06)^2}{\alpha^2}\right)} \right).
\end{equation*}
Next we match the variance:
\begin{align*}
    0.05^2&=\int_{-\infty}^{\infty}\frac{C(y-0.06)^2}{\alpha\left(1+\frac{(y-0.06)^2}{\alpha^2}\right)^{3}}\exp\!\left(\frac{-1}{\left(1+\frac{(y-0.06)^2}{\alpha^2}\right)} \right)dy\\
    & =\alpha ^2 \int_{-\infty}^{\infty} \frac{C\,z^2}{\left(1+z^2\right)^{3}}\exp\left(\frac{-1}{\left(1+z^2\right)} \right)dz\\
    & = \alpha^2 m_2,
\end{align*}
where $m_2$ can be estimated numerically ($m_2 \approx 0.4708723$) to yield $\alpha = \frac{0.05}{\sqrt{m_2}} \approx 0.07286491$. Next, to calibrate the parameter $b$ we match the derivative at zero of the autocorrelation function under the stationary distribution. Recall for model \eqref{eq. Kim-Omberg} this yields $R'(0)=-0.5$ where
\begin{align*}
    R(t)=\frac{\mathbb{E}[Y_0Y_t]}{Var(Y_0)}.
\end{align*}
For model \eqref{eq. FT excess returns} we see
\begin{equation*}
    R'(0)=\frac{\mathbb{E}[Y_0b(\bar{\mu}-Y_0)^3]}{Var(Y_0)}.
\end{equation*}
Hence we set
\begin{equation*}
    b= -\frac{Var(Y_0)}{2\mathbb{E}[Y_0(\bar{\mu}-Y_0)^3]}.
\end{equation*}
The right-hand side can be estimated numerically, which yields $b \approx 22.82987$. Finally, we use $\lambda=\frac{b\alpha^4}{a^2}$ to solve for $a$ (recall $\lambda=1$), yielding the final parameters:

\begin{align*}
 \bar{\mu}=0.06, && \alpha =  0.072865, && a= 0.025368, && b = 22.82987.  
\end{align*}
Figure \ref{fig:ft_consumption_portfolio} shows the numerical solution for model \eqref{eq. FT excess returns}.
\begin{figure}[H]
  \begin{minipage}[t]{0.48\textwidth}
    \includegraphics[width=\textwidth, height=5.5cm]{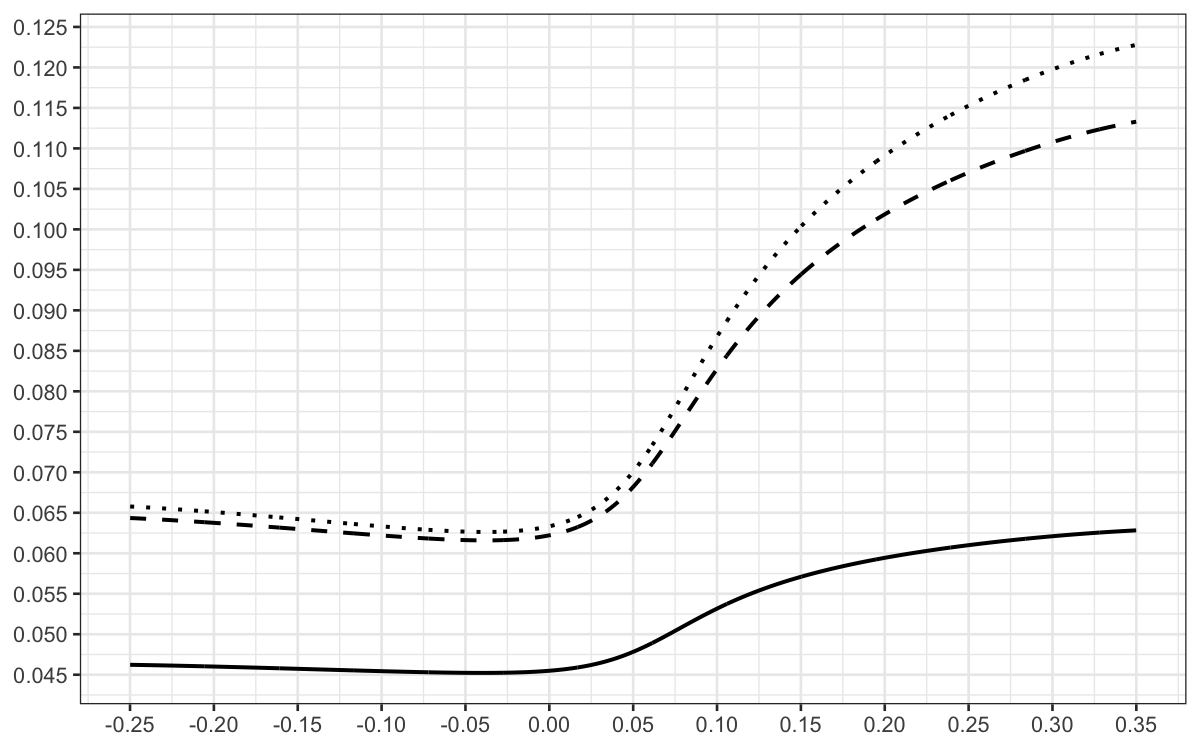}
  \end{minipage}
  \hfill
  \begin{minipage}[t]{0.48\textwidth}
    \includegraphics[width=\textwidth,height=5.5cm]{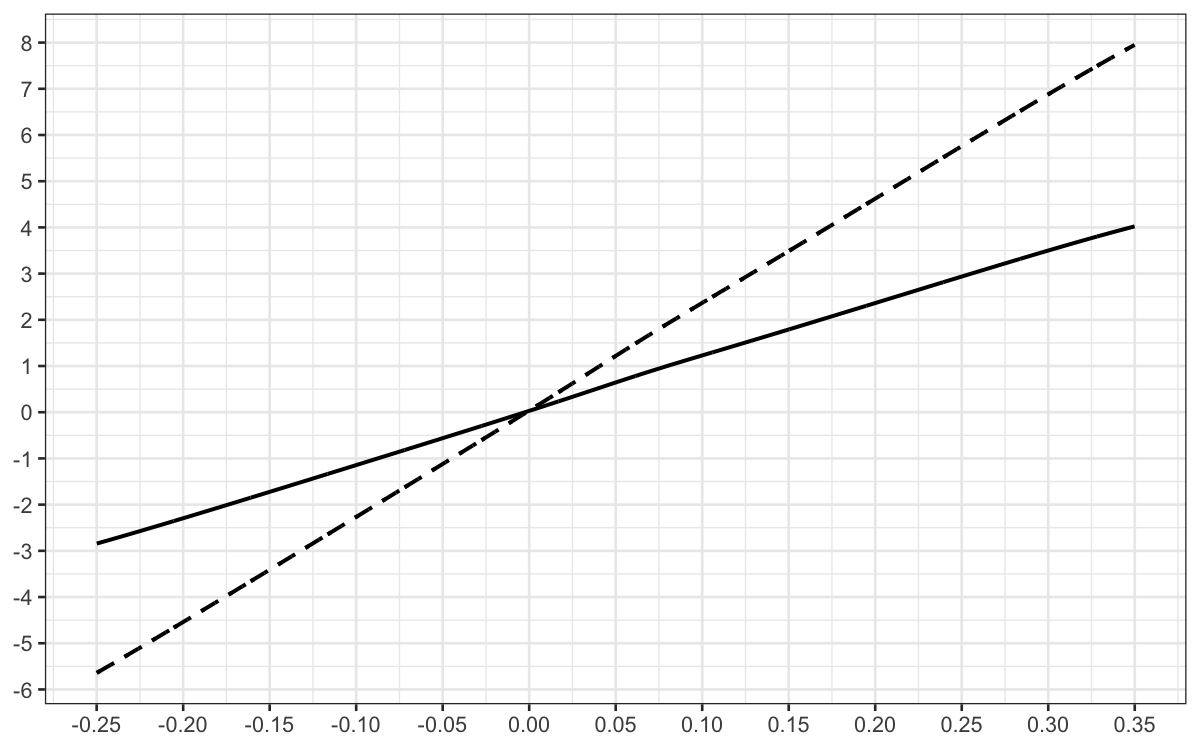}
  \end{minipage}

  \caption{Optimal consumption ratio $l^\ast(y) = g(y)^{-\nu}$ (left panel) and optimal portfolio weight $\pi^\ast(y)$ (right panel) as a function of the state variable $y$ (within a $99\%$ confidence interval for the long-run stationary density of $Y$), for $(\gamma,\delta)=(2,9)$ (dotted lines), $(\gamma,\delta)=(2,5)$ (dashed lines) and $(\gamma,\delta)=(4,5)$ (solid lines). Parameters: $r=0.02$, $\beta=0.05$, $\sigma=0.15$, $b=22.82987$, $\bar{\mu}=0.06$,
    $a=0.025368$, $\alpha=0.072865$, $\rho=-0.30$.}
  \label{fig:ft_consumption_portfolio}
\end{figure}

In Figure \ref{fig:ft_consumption_portfolio}, to facilitate comparison with markets exhibiting fat-tailed excess returns, we report the optimal consumption and investment policies with the same preference parameters as in the Gaussian excess-returns model \eqref{eq. Kim-Omberg}. The relative shape of the investment policy is similar to that of the Kim-Omberg model; however, the consumption policy is significantly different.

The consumption plot illustrates an abrupt departure from the quadratic myopic policy. Consumption is extremely stable when the expected excess return is negative, varying by less than half a percent in all parameter configurations. When $y>0$, the consumption policy is increasing in the state. In all three parameter regimes there is a critical point at which the consumption policy switches from convex to concave in the state. This phenomenon does not occur in the Kim-Omberg model \eqref{eq. Kim-Omberg}, where the consumption policy is a convex function of the state over the entire $99\%$ confidence interval for the stationary distribution.

\section{Verification} \label{sec. Verification}

The goal of this section is to prove Theorem \ref{thm. Main Theorem}. Throughout this section, let $g$ be a minimiser of the variational problem \eqref{eq. 1-d minisation problem} (which exists by Theorem \ref{thm. Existence uniqueness regularity and boundedness solution to variational problem}). By Lemma \ref{lem. any minimiser is bounded and C^2},  $g \in C^2(E;\mathbb{R})$ and $0<g\leq M$; since the Euler-Lagrange equation coincides with the HJB equation, the candidate value function is $\hat{V}(x,y)=\frac{x^{1-\gamma}}{1-\gamma}g(y)^m$. Verification has three steps. First, we prove that the candidate value function is the unique utility process associated with the candidate optimal consumption stream. Next, we introduce the perturbed process $\hat{V}(X+\varepsilon\hat{X},Y)$, where $X$ is the wealth process generated by an arbitrary consumption stream and $\hat{X}$ is the wealth process under the candidate optimal controls. We prove that $\hat{V}(X+\varepsilon\hat{X},Y)$ is a supersolution to \eqref{eq. Utility Process} for an arbitrary pair of admissible policies; finally, a comparison argument yields optimality.

\begin{prop} \label{prop. Candidate Value function solves BSDE}
    The candidate value function $\hat{V}(x,y)=\frac{x^{1-\gamma}}{1-\gamma}g(y)^m$ with candidate controls $(\hat{\pi},\hat{l})$ (given by \eqref{eq. Candidate optimal controls}) satisfies
    \begin{equation*}
    \hat{V}(x,y)= \EE\left[\int_0^{\infty}\frac{(X_t^{\hat{\pi},\hat{l}}\,\hat{l}_t)^{1-\delta}}{1-\delta}\left((1-\gamma)\hat{V}(X_t^{\hat{\pi},\hat{l}},Y_t)\right)^{1-\frac{1}{\theta}}dt \, \Big | \, X_0 =  x, Y_0 = y\right],
    \end{equation*}
    and hence is the unique solution to the BSDE \eqref{eq. Utility Process} for the candidate controls.
\end{prop}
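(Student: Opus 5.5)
Our plan is to apply Itô's formula to $\hat V(\hat X_t,Y_t)$, where $\hat X=X^{\hat\pi,\hat l}$, and use the HJB equation \eqref{eq. general HJB final semi-linear form} satisfied by $g$. The drift of $\hat V(\hat X_t,Y_t)$ will then equal minus the aggregator $f(\hat X_t\hat l_t,\hat V(\hat X_t,Y_t))$. This reduces the claim to controlling the local martingale part and the behaviour of $\hat V(\hat X_T,Y_T)$ as $T\to\infty$. The computation should be carried out under the myopic probability measure, with the martingale part written as a stochastic exponential.

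\textbf{Step 1 (Itô identity).} Since $g\in C^2$ and $0<g\le M$ by Lemma \ref{lem. any minimiser is bounded and C^2}, we write $\hat V(\hat X_t,Y_t)=\frac{\hat X_t^{1-\gamma}}{1-\gamma}g(Y_t)^m$. Because $\hat l=g^{-\nu}$ and $\nu=m/(\delta\theta)$, we have
\[
f(\hat X\hat l,\hat V)=\frac{\hat X^{1-\gamma}}{1-\delta}\,g^{-\nu(1-\delta)}\,g^{m(1-1/\theta)}=\frac{\hat X^{1-\gamma}}{1-\gamma}\,\theta\, g^{m-\nu}.
\]
Here we used $(1-\gamma)^{1-1/\theta}\hat X^{(1-\gamma)(1-1/\theta)}$ and $(1-\delta)\theta=1-\gamma$. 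The choice of $m$ solving \eqref{eq. quadratic equation in m} kills the $(g')^2/g$ term. Rewriting the HJB in terms of $\hat V$ then gives
\[
d\hat V(\hat X_t,Y_t)=-f(\hat X_t\hat l_t,\hat V_t)\,dt+\hat V_t\,dN_t,
\]
where $N$ is a continuous local martingale, built from $(1-\gamma)\hat\pi^\top\sigma\,dZ+m\frac{g'}{g}a\,dW$. Equivalently, $\hat V_t=\hat V_0\,\mathcal E(N)_t\exp\!\big(-\int_0^t\theta g(Y_s)^{-\nu}ds\big)$, since $f/\hat V=\theta g^{-\nu}=\theta\hat l$.

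\textbf{Step 2 (change of measure and the integral identity).} Define $d\tilde{\mathbb P}/d\mathbb P|_{\mathcal F_t}=\mathcal E(N)_t$. Under $\tilde{\mathbb P}$, a Girsanov computation should show that $Y$ has the shifted drift $\tilde b$ of \eqref{eq. shifted state variable}, or a closely related one (the myopic probability). Assumptions \ref{assum. eta is a density}–\ref{assum. Recurrence} give that $Y$ is recurrent under the shifted drift, so by Feller's test it does not explode. This yields that $\mathcal E(N)$ is a true martingale. We then have
\[
\EE\Big[\int_0^T f\,dt\Big]=\hat V_0\,\tilde\EE\Big[\int_0^T\theta g(Y_t)^{-\nu}e^{-\theta\int_0^t g^{-\nu}}dt\Big]=\hat V_0\,\tilde\EE\big[1-e^{-\theta\int_0^T g(Y_s)^{-\nu}ds}\big].
\]
Since $g\le M$, we get $g^{-\nu}\ge M^{-\nu}>0$, so the exponential tends to zero deterministically. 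By monotone convergence the left side tends to $\hat V_0$, which proves the displayed identity. The same argument, conditioned on $\mathcal F_t$, gives the identity at every time $t$: $\hat V_t=\EE[\int_t^\infty f\,ds\mid\mathcal F_t]$. Hence $\hat V$ solves \eqref{eq. Utility Process} for $(\hat\pi,\hat l)$.

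\textbf{Step 3 (uniqueness).} We must check that $(\hat\pi,\hat l)\in\mathcal A$, that is, that the linear SDE for $\hat X$ has a unique strong solution. This holds because the coefficients are continuous functions of $Y$ and hence locally bounded along paths. Since $\theta\in(0,1)$, the uniqueness of the generalised utility process from \citet{HHJ(2023):EZutilityII} for nonnegative progressively measurable consumption then identifies $\hat V$ as the unique solution.

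\textbf{Main obstacle.} We expect the hardest step to be proving that $\mathcal E(N)$ is a true martingale, i.e.\ that $\tilde{\mathbb P}$ is well defined. The coefficients are unbounded, and $g'/g$ is not a priori bounded, so Novikov's criterion is unavailable. Our approach is to identify the $\tilde{\mathbb P}$-dynamics of $Y$ as a one-dimensional diffusion with drift $\tilde b$ plus a term proportional to $a^2 g'/g$. We then verify non-explosion through its scale function, comparing it to $p$ in \eqref{eq. scale function of shifted state variable}, using $g$ bounded above and below locally, and conclude via the standard one-dimensional martingale criterion (Mijatović–Urusov type).
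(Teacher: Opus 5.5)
Your proposal is correct and follows essentially the same route as the paper. Your identity $\hat V_t=\hat V_0\,\mathcal E(N)_t\exp(-\int_0^t\theta g(Y_s)^{-\nu}ds)$ is exactly Lemma \ref{lem. discounted utility of wealth}, with $\mathcal E(N)=D$. You then change to the myopic measure, use $\int_t^T\theta g^{-\nu}e^{-\theta\int_t^u g^{-\nu}}du=1-e^{-\theta\int_t^T g^{-\nu}}$, pass to the limit by monotone convergence using $g\le M$, and get uniqueness from \citet{HHJ(2023):EZutilityII}, just as the paper does.

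The paper obtains the martingale property of $D$ from well-posedness of the $(R,Y)$ martingale problem together with \citet[Lemma A.3]{GW:ConsumptionIncompleteMarkets(2020)}. There the $Y$-drift is exactly $\tilde b+a^2g'/g$, not $\tilde b$, so the scale-function comparison in your ``Main obstacle'' paragraph is the right argument rather than the one sketched in Step 2. It needs only $g\le M$, since the new scale function $\int p'(y)\,(g(x_0)/g(y))^2dy$ then diverges at both endpoints. Note also that $\mathcal E(N)$ has a $dB$-component: the martingale-problem route handles it directly, whereas a one-dimensional Mijatovi\'c--Urusov criterion covers it only after an extra conditioning-on-$W$ step.
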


Introduce the process $Z_t^{\varepsilon}=X_t^{\pi,l}+\varepsilon \hat{X}_t$, where $\varepsilon>0$, $(\pi,l) \in \mathcal{A}$ are arbitrary, and $\hat{X}$ is the wealth process under the candidate optimal controls $(\hat{\pi},\hat{l})$ started from unit wealth (i.e., $\hat{X}_0=1$). Then the dynamics of $Z^{\varepsilon}$ are
\begin{align} \label{eq. Dynamics of Z_t^varepsilon}
    \frac{dZ_t^{\varepsilon}}{Z_t^{\varepsilon}}=r+(\pi^{z})^{\top}dR_t-l^{z}dt,
\end{align}
where
\begin{align*}
    (\pi_t^z)^{\top}=\frac{X_t\pi_t^{\top}+\varepsilon \hat{X}_t\hat{\pi}_t^{\top}}{Z_t^{\varepsilon}};&& l_t^z=\frac{X_tl_t+\varepsilon \hat{X}_t\hat{l}_t}{Z_t^{\varepsilon}}=\frac{C_t^\varepsilon}{Z_t^\varepsilon}.
\end{align*}
We prove that $\hat{V}(Z_t^\varepsilon,Y_t)$ is a supersolution of \eqref{eq. Utility Process} for the consumption stream $C_t^\varepsilon$ in the sense of \citet[Definition 5.3]{HHJ(2023):EZutilityII}. For the reader's convenience, we recall the definition.
\begin{definition}[\cite{HHJ(2023):EZutilityII}]
    Let $C_t$ be a non-negative progressively measurable consumption stream.
    \begin{itemize}
        \item $V$ is a supersolution to \eqref{eq. Utility Process} for the consumption stream $C$ if $\liminf_{t\to \infty}\mathbb{E}[V_{t_+}]\geq0$ and for all bounded stopping times $\tau_1\leq \tau_2$,
    \begin{align*}
        V_{\tau_1}\geq \mathbb{E}\left[V_{\tau_2}+ \int_{\tau_1}^{\tau_2}\frac{(C_t)^{1-\delta}}{1-\delta}\left((1-\gamma)V_t\right)^{1-\frac{1}{\theta}}dt \mid\mathcal{F}_{\tau_1}\right].
    \end{align*}

    \item $V$ is a subsolution to \eqref{eq. Utility Process} for the consumption stream $C$ if $\limsup_{t\to \infty}\mathbb{E}[V_{t_+}]\leq0$ and for all bounded stopping times $\tau_1\leq \tau_2$,
    \begin{align*}
        V_{\tau_1}\leq \mathbb{E}\left[V_{\tau_2}+ \int_{\tau_1}^{\tau_2}\frac{(C_t)^{1-\delta}}{1-\delta}\left((1-\gamma)V_t\right)^{1-\frac{1}{\theta}}dt \mid\mathcal{F}_{\tau_1}\right].
    \end{align*}
    \end{itemize}

\end{definition}
\begin{lemma} \label{lem. Perturbed candidate value function is supersolution}
    The process $\hat{V}(Z_t^\varepsilon,Y_t)$ is a supersolution of equation \eqref{eq. Utility Process} for the consumption stream $C_t^\varepsilon$.
\end{lemma}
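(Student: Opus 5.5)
Apply It\^o's formula to $\hat V(Z^\varepsilon_t,Y_t)$ and use the HJB equation \eqref{eq. HJB}, in which $g$ attains the supremum at $(\hat\pi,\hat l)$. Let $\mathcal{D}_t$ denote the drift of $\hat V(Z^\varepsilon_t,Y_t)$ plus the aggregator $f(C^\varepsilon_t,\hat V(Z^\varepsilon_t,Y_t))$. Since $\hat V$ is homogeneous of degree $1-\gamma$ in $x$, factoring out $(1-\gamma)\hat V(Z^\varepsilon_t,Y_t)$ (and using $g(Y_t)^{-m/\theta}$ for the aggregator term) gives
\begin{align*}
\mathcal{D}_t=(1-\gamma)\hat V(Z^\varepsilon_t,Y_t)\Bigl[\Phi(\pi^z_t,l^z_t,Y_t)-\sup_{\pi,l}\Phi(\pi,l,Y_t)\Bigr],
\end{align*}
where $\Phi$ is the expression inside the supremum of \eqref{eq. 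HJB} together with the $\pi,l$-free terms. The HJB equation says that $\sup\Phi=0$. Note that $(1-\gamma)\hat V=Z^{1-\gamma}g^m>0$. Hence $\mathcal{D}_t\le 0$, i.e.\ $\hat V(Z^\varepsilon,Y)+\int_0^\cdot f(C^\varepsilon_s,\hat V(Z^\varepsilon_s,Y_s))\,ds$ is a local supermartingale. Whatever the sign of $1-\gamma$, the factorisation has the right orientation, and this should be checked carefully. The concavity of $l\mapsto \frac{l^{1-\delta}}{1-\delta}g^{-m/\theta}-l$ and of $\pi\mapsto \pi^\top\mu-\frac{\gamma}{2}\pi^\top\Sigma\pi+m\pi^\top\Upsilon g'/g$ makes the supremum a genuine maximum. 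The role of $\varepsilon>0$ is that $Z^\varepsilon\ge\varepsilon\hat X>0$, so It\^o's formula applies even when $X$ hits zero, and $l^z$ is well defined.

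The next step is to upgrade this local supermartingale to the inequality required in \citet[Definition 5.3]{HHJ(2023):EZutilityII} for bounded stopping times $\tau_1\le\tau_2$. Localise with stopping times $\sigma_n$ that reduce the stochastic integral, so that the inequality holds with $\tau_2\wedge\sigma_n$. Then pass to the limit $n\to\infty$. For the integral term, use monotone convergence: $f\ge0$ when $\gamma<1$, and $f\le 0$ when $\gamma>1$, since $(1-\gamma)\hat V>0$ makes the sign of $f$ that of $1-\delta$, which equals that of $1-\gamma$ because $\theta>0$.

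For the terminal term, distinguish cases. If $\gamma>1$, then $\hat V\le0$, so $\hat V(Z^\varepsilon,Y)\le \frac{(\varepsilon\hat X)^{1-\gamma}}{1-\gamma}g^m$ up to the bounds on $g$. Here I would use that $\hat X^{1-\gamma}g(Y)^m$ is, by Proposition \ref{prop. Candidate Value function solves BSDE}, a (class D on bounded intervals) utility process, so Fatou and uniform integrability give the limit. If $\gamma<1$, then $\hat V\ge0$ and Fatou's lemma in the right direction needs uniform integrability of $Z^{1-\gamma}$ from above. I would bound $Z^{1-\gamma}\le X^{1-\gamma}+\varepsilon^{1-\gamma}\hat X^{1-\gamma}$ and use $g\le M$, possibly restricting to policies with finite value, as in \citep{HHJ(2023):EZutilityII}.

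Finally, the condition $\liminf_{t\to\infty}\mathbb{E}[\hat V(Z^\varepsilon_{t},Y_{t})]\ge0$ is immediate when $\gamma<1$, since $\hat V\ge0$. When $\gamma>1$, use $0<g\le M$, $m>0$ and monotonicity of $\hat V$ in $x$ to get $0\ge \hat V(Z^\varepsilon,Y)\ge \varepsilon^{1-\gamma}\hat V(\hat X,Y)$. It then suffices that $\mathbb{E}[\hat X_t^{1-\gamma}g(Y_t)^m]\to0$, which is the transversality property of the candidate utility process. I would extract it from Proposition \ref{prop. Candidate Value function solves BSDE} via the change to the myopic measure and the recurrence/stationarity Assumptions \ref{assum. eta is a density}--\ref{assum. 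Recurrence}. The main obstacle is these integrability and transversality limits rather than the algebraic drift inequality, and this $\gamma>1$ transversality is the step I would attack first.
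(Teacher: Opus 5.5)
Your architecture is the paper's. You use It\^o's formula with the HJB inequality $\mathcal{L}^{\pi,l}\hat V+f\le 0$; your factorisation through $(1-\gamma)\hat V=(Z^\varepsilon)^{1-\gamma}g^m>0$ is a clean way to see it. Then come localisation, monotone convergence for the aggregator integral, a lower bound through $\hat V(\varepsilon\hat X,Y)$ (equivalently through the martingale $D$), and transversality for the $\liminf$ condition. The step that fails as you set it up is the terminal term for $\gamma\in(0,1)$.

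Write $\Phi=\hat V(Z^\varepsilon,Y)$ and let $\zeta_n$ be the localising times (your $\sigma_n$). Passing to the limit in
\[
\Phi_{\tau_1}\ \ge\ \mathbb{E}\bigl[\Phi_{\tau_2\wedge\zeta_n}\mid\mathcal{F}_{\tau_1}\bigr]+\mathbb{E}\Bigl[\int_{\tau_1}^{\tau_2\wedge\zeta_n} f\bigl(C^\varepsilon_s,\Phi_s\bigr)\,ds\,\Bigm|\,\mathcal{F}_{\tau_1}\Bigr]
\]
only requires $\liminf_n\mathbb{E}[\Phi_{\tau_2\wedge\zeta_n}\mid\mathcal{F}_{\tau_1}]\ge\mathbb{E}[\Phi_{\tau_2}\mid\mathcal{F}_{\tau_1}]$. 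That is conditional Fatou, which needs an integrable \emph{lower} bound, not uniform integrability from above.

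For $\gamma<1$ the lower bound is simply $\Phi\ge 0$, and this is all the paper uses. You also need continuity of $\Phi$ and the fact that $\tau_2\wedge\zeta_n=\tau_2$ for large $n$.

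The route you propose instead does not close. Bounding $(Z^\varepsilon)^{1-\gamma}\le X^{1-\gamma}+\varepsilon^{1-\gamma}\hat X^{1-\gamma}$ would require uniform integrability of $X^{1-\gamma}$ along bounded stopping times for an \emph{arbitrary} $(\pi,l)\in\mathcal{A}$. Nothing in the assumptions provides this. Restricting to policies with finite value changes the statement of the lemma and still gives no such integrability. The supersolution/minimal-supersolution route is designed precisely so that no integrability of the arbitrary wealth process is ever needed.

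Two smaller points.

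For $\gamma>1$ your displayed inequality has the wrong sign. Since $\hat V$ is increasing in $x$ and $Z^\varepsilon\ge\varepsilon\hat X$,
\[
\hat V(Z^\varepsilon_t,Y_t)\ \ge\ \hat V(\varepsilon\hat X_t,Y_t)=\hat V(\varepsilon,y)\,e^{-\int_0^t\theta\hat l_u\,du}\,D_t\ \ge\ \hat V(\varepsilon,y)\,D_t .
\]
It is this lower bound that feeds Fatou; it is uniformly integrable over bounded stopping times because $D$ is a true martingale. Your final paragraph has the correct direction.

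The transversality limit you planned to attack first is immediate. Since $\hat l=g^{-\nu}\ge M^{-\nu}$, you get $\bigl|\mathbb{E}[\hat V(\hat X_t,Y_t)]\bigr|\le|\hat V(1,y)|\,e^{-\theta M^{-\nu}t}$. This needs no appeal to recurrence or stationarity beyond what is required for $D$ to be a martingale.
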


\begin{corollary} \label{cor. Perturbed candidate value function is supersolution for arbitrary consumption policies}
    The process $\hat{V}(Z_t^\varepsilon,Y_t)$ is a supersolution of equation \eqref{eq. Utility Process} for any arbitrary consumption stream $C_t$.
\end{corollary}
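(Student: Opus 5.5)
The plan is to obtain the corollary directly from Lemma \ref{lem. Perturbed candidate value function is supersolution} by monotonicity of the Epstein-Zin aggregator \eqref{eq. EZ aggregator} in its consumption argument. I read ``arbitrary consumption stream $C_t$'' as the stream $C_t = X_t^{\pi,l} l_t$ generated by the arbitrary admissible pair $(\pi,l)\in\mathcal{A}$ used to build $Z^\varepsilon$. The key observation is the pointwise inequality
\[
C_t^\varepsilon = X_t l_t + \varepsilon \hat{X}_t \hat{l}_t \geq X_t l_t = C_t \geq 0,
\]
which holds because $\hat{X}_t\geq 0$ and $\hat{l}_t = g(Y_t)^{-\nu}>0$ (recall $0<g\leq M$ by Lemma \ref{lem. any minimiser is bounded and C^2}).

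First, I will check that the second argument of the aggregator is positive along the perturbed process:
\[
(1-\gamma)\hat{V}(Z_t^\varepsilon,Y_t) = (Z_t^\varepsilon)^{1-\gamma} g(Y_t)^m .
\]
This is strictly positive, since $Z^\varepsilon_t \geq \varepsilon \hat{X}_t>0$ and $g>0$. Hence $((1-\gamma)\hat{V})^{1-\frac1\theta}$ is a well-defined nonnegative weight.

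Next, I will show that $c\mapsto \frac{c^{1-\delta}}{1-\delta}$ is nondecreasing on $[0,\infty)$ for every $\delta\in\mathbb{R}^+\setminus\{1\}$. Its derivative is $c^{-\delta}>0$ for $c>0$. When $\delta>1$ the map takes the value $-\infty$ at $c=0$, but it remains monotone in the extended sense. Combining the two facts gives
\[
\frac{(C_t^\varepsilon)^{1-\delta}}{1-\delta}\left((1-\gamma)\hat{V}(Z_t^\varepsilon,Y_t)\right)^{1-\frac{1}{\theta}}
\;\geq\; \frac{(C_t)^{1-\delta}}{1-\delta}\left((1-\gamma)\hat{V}(Z_t^\varepsilon,Y_t)\right)^{1-\frac{1}{\theta}} .
\]

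I then integrate this inequality over $[\tau_1,\tau_2]$ for bounded stopping times $\tau_1\leq\tau_2$ and take conditional expectations. Inserting the result into the supersolution inequality from Lemma \ref{lem. Perturbed candidate value function is supersolution} yields
\[
\hat{V}(Z_{\tau_1}^\varepsilon,Y_{\tau_1}) \geq \mathbb{E}\left[\hat{V}(Z_{\tau_2}^\varepsilon,Y_{\tau_2})+\int_{\tau_1}^{\tau_2}\frac{(C_t)^{1-\delta}}{1-\delta}\left((1-\gamma)\hat{V}(Z_t^\varepsilon,Y_t)\right)^{1-\frac{1}{\theta}}dt \,\Big|\,\mathcal{F}_{\tau_1}\right].
\]
The transversality requirement $\liminf_{t\to\infty}\mathbb{E}[\hat{V}(Z_t^\varepsilon,Y_t)_+]\geq 0$ does not involve the consumption stream, so it carries over verbatim from the lemma. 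Together these give the supersolution property for $C$.

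The only delicate point I expect is integrability when $\delta>1$. In that case the integrand for $C$ can be very negative, and even $-\infty$ where $C_t=0$. I would handle this by noting that the conditional expectation of the smaller integrand is then well-defined in $[-\infty,\infty)$, so the inequality holds trivially wherever it equals $-\infty$. Where it is finite, monotonicity of conditional expectation applies because the larger integrand (for $C^\varepsilon$) has a finite conditional expectation by the lemma. This is consistent with the generalised utility-process framework of \citet{HHJ(2023):EZutilityII}, which allows such values.
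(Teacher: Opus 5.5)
Your proof is correct and is the same argument the paper gives: Lemma~\ref{lem. Perturbed candidate value function is supersolution} combined with $C^\varepsilon_t\ge C_t$ and monotonicity of the aggregator in consumption. The paper's text says ``non-increasing'' there, but non-decreasing, the direction you use, is what the argument actually needs. Your integrability caveat can be simplified: since $\theta\in(0,1)$ forces $1-\gamma$ and $1-\delta$ to have the same sign, $\hat V$ and both integrands all share the sign of $1-\gamma$. Hence every conditional expectation is well defined and monotone, and you do not need any finiteness claim from the lemma.
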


The proof of Corollary \ref{cor. Perturbed candidate value function is supersolution for arbitrary consumption policies} follows directly from Lemma \ref{lem. Perturbed candidate value function is supersolution} and the observation that the aggregator $f(c,v)$ is non-increasing in its first argument.
The proof of Lemma \ref{lem. Perturbed candidate value function is supersolution} follows from an extension of the arguments in \citet[Theorem 8.1]{HHJ(2023):EZutilityII}.

\begin{lemma} \label{lem. Comparison lemma}
    Let $C$ be an arbitrary consumption stream and assume $\gamma>1$. Let $V$ be a subsolution of \eqref{eq. Utility Process} for the consumption stream $C$. Then $V_{\tau}\leq \hat{V}(Z_{\tau}^\varepsilon,Y_\tau)$ $\mathbb{P}$-a.s. for all finite stopping times $\tau$.
\end{lemma}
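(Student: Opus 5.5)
We argue by a comparison principle adapted from \citet{HHJ(2023):EZutilityII}, working in the case $\gamma>1$. In this regime utilities are negative: $V\le 0$ and $\hat{V}(Z^\varepsilon,Y)<0$. The key point is that $\hat{V}(Z^\varepsilon_t,Y_t)=\frac{(Z^\varepsilon_t)^{1-\gamma}}{1-\gamma}g(Y_t)^m$ is strictly negative, and since $0<g\le M$ and $Z^\varepsilon_t\ge \varepsilon\hat{X}_t>0$, it is bounded away from zero on compact random sets. The aggregator $f(c,v)=\frac{c^{1-\delta}}{1-\delta}((1-\gamma)v)^{1-1/\theta}$ has $1-1/\theta<0$. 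Hence $v\mapsto f(c,v)$ is decreasing on $v<0$ when $\frac{c^{1-\delta}}{1-\delta}<0$ (i.e.\ $\delta>1$), and it is concave. By Corollary \ref{cor. Perturbed candidate value function is supersolution for arbitrary consumption policies}, $\hat{V}(Z^\varepsilon,Y)$ is a supersolution for $C$. The statement is therefore a comparison between a subsolution $V$ and a strictly negative supersolution $U:=\hat{V}(Z^\varepsilon,Y)$ for the same consumption stream.

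\textbf{Step 1 (reduction via a multiplicative transform).} Following \citet[Section 5]{HHJ(2023):EZutilityII}, we use the homogeneity of $f$ in $v$: for $k\ge1$, $f(c,kv)=k^{1-1/\theta}f(c,v)$. Suppose the claim fails, so that on a set $A\in\mathcal{F}_\tau$ of positive probability $V_\tau>U_\tau$. We consider the ratio process $\Lambda_t=V_t/U_t$, which is well defined since $U<0$. On the set where $\Lambda<1$, i.e.\ $|V|<|U|$, comparing the drift inequalities of the two processes shows that $\Lambda$ behaves like a supermartingale-type object. The factor $((1-\gamma)v)^{1-1/\theta}$ is what makes the comparison work: at a given consumption rate, a smaller $|v|$ gives a larger (in absolute value) negative drift contribution.

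\textbf{Step 2 (ordering at infinity).} We combine the transversality conditions $\limsup\mathbb{E}[V_{t}]\le 0$ for the subsolution and $\liminf\mathbb{E}[U_t]\ge0$ for the supersolution with the strict negativity of $U$. The aim is to show that $\mathbb{E}[(V_{\sigma}-U_{\sigma})^+\mathbf{1}_A]$ cannot persist as $\sigma\to\infty$ along bounded stopping times $\sigma\ge\tau$. Concretely, we subtract the sub- and supersolution inequalities on $[\tau,\sigma]$ and restrict to $\{V>U\}$ via the stopping time $\sigma_n=\inf\{t\ge\tau: V_t\le U_t\}\wedge n$. On $[\tau,\sigma_n)$ we have $V>U$, hence $(1-\gamma)V<(1-\gamma)U$. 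With negative exponent this gives $((1-\gamma)V)^{1-1/\theta}>((1-\gamma)U)^{1-1/\theta}$, and multiplying by $\frac{C^{1-\delta}}{1-\delta}<0$ the $f$-term for $V$ is smaller. This yields $V_\tau-U_\tau\le\mathbb{E}[V_{\sigma_n}-U_{\sigma_n}\mid\mathcal{F}_\tau]$ on $A$. Letting $n\to\infty$, the right side tends to something $\le0$ by the transversality conditions, which contradicts $V_\tau>U_\tau$ on $A$. If instead $\delta<1$, then $\gamma>1$ is incompatible with $\theta\in(0,1)$ (it forces $\theta<0$), so this sign case does not arise.

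\textbf{Main obstacle.} The delicate step is the passage $n\to\infty$. Only $\liminf/\limsup$ of unconditional expectations are controlled, while the increments must be handled conditionally on $\mathcal{F}_\tau$ and on $A$, and we need uniform integrability of $V_{\sigma_n}-U_{\sigma_n}$. My first attempt would be to exploit $V\le0$ and $U\le 0$: then $(V-U)\le -U$, and $-U$ is a nonnegative supermartingale up to the $f$-term, which gives domination. I would replace $A$-conditioning by multiplying with $\mathbf{1}_A$ and taking expectations, since $\mathbf{1}_A(V_\tau - U_\tau)$ is what we need positive in expectation. If uniform integrability fails, I would fall back on the ratio argument of \citet[Theorem 5.4]{HHJ(2023):EZutilityII}, scaling $U$ by $(1+\eta)$ and using the strict inequality $f(c,(1+\eta)u)>(1+\eta)f(c,u)$ on $u<0$ to absorb the boundary terms.
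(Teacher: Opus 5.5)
Your Step 2 is essentially the paper's proof. You argue by contradiction on $A\in\mathcal{F}_\tau$, stop at the first time the ordering is restored, and subtract the sub- and supersolution inequalities. The integral term is signed by the monotonicity of $f$ in $v$, and the terminal term on $\{\sigma\ge n\}$ is controlled by $V_n-U_n\le -U_n$ with $\mathbb{E}[-U_n]\to 0$; the paper gets this last limit from $U_n\ge\hat{V}(\varepsilon\hat{X}_n,Y_n)$ and Lemma \ref{lem. Transversality}, which is equivalent to using $U\le 0$ together with $\liminf_t\mathbb{E}[U_t]\ge 0$. The ratio transform in Step 1 and the $(1+\eta)$-scaling fallback are not needed, since the passage $n\to\infty$ only requires this one-sided bound, not uniform integrability.
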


\begin{remark}

      Lemma \ref{lem. Comparison lemma} is not a general comparison theorem for arbitrary sub- and supersolutions. It is tailored to the candidate supersolution above, for which a lower bound by a uniformly integrable martingale replaces the assumption that either the subsolution or the supersolution is an element of $\mathbb{UI}(f,c)$ used in \citet[Theorem 5.8]{HHJ(2023):EZutilityII}. This is useful in stochastic-investment-opportunity models, where the sufficient integrability conditions used in the Black–Scholes–Merton case are not easy to verify.
      Recall that a progressively measurable process satisfies $V \in \mathbb{UI}(f,c)$ if
    \begin{equation*}
        \mathbb{E}\Big[\int_0^\infty|f(c_s,V_s)|ds\Big]<\infty,
    \end{equation*}
    and $V$ is uniformly integrable. 

\end{remark}

We are now in a position to prove our main result.
\begin{proof}[Proof of Theorem \ref{thm. Main Theorem}]
    By Corollary \ref{cor. Perturbed candidate value function is supersolution for arbitrary consumption policies}, $\hat{V}(X_t^{\pi,l}+\varepsilon \hat{X}_t,Y_t)$ is a supersolution to equation \eqref{eq. Utility Process} for an arbitrary consumption stream $C_t$. Let $V_t ^{C}$ denote the unique generalised utility process associated with the consumption stream $C_t$.

    First, let $\gamma\in (0,1)$, which implies $V^C \in [0,\infty]$. In this case $V_t^C$ is the minimal supersolution \citep[Theorem 6.5]{HHJ(2023):EZutilityII}. Evaluating these utility processes at time zero yields
    \begin{equation*}
        \hat{V}(x+\varepsilon,y)\geq V_0^C=V_0^{\pi,l}
    \end{equation*}
    by minimality.

    Next, let $\gamma>1$, which implies $V^C \in [-\infty,0]$. In this case $V_t^{C}$ is the maximal subsolution for the consumption stream $C_t$ \citep[Theorem 6.5]{HHJ(2023):EZutilityII} (hence, the argument in the $\gamma\in (0,1)$ case is not applicable). To compare these two utility processes, we use Lemma \ref{lem. Comparison lemma}, yielding
    \begin{equation*}
    \hat{V}(x+\varepsilon,y)\geq V_0^C=V_0^{\pi,l}.
    \end{equation*}
    Thus, in both cases, taking the supremum over $(\pi,l)\in \mathcal{A}$ on both sides and sending $\varepsilon \to0$ yields
    \begin{align*}
        \hat{V}(x,y)\geq \sup_{(\pi,l)\in \mathcal{A}}V_0^{\pi,l}.
    \end{align*}
\end{proof}

\section{Conclusion} \label{sec. Conclusion}
In this article we have solved the infinite-horizon Epstein-Zin optimal consumption-investment problem in an incomplete market with stochastic investment opportunities in the regime $\theta\in(0,1)$. Our main result characterises the value function as the minimiser of a non-convex functional whose Euler-Lagrange equation coincides with the HJB equation; this analysis extends results in \citet{GLT:VariationalApproach(2025)} for the CRRA regime. Verification identifies an arbitrary minimiser with the value function. The argument combines a change of measure to the myopic probability, the uniqueness theory for Epstein-Zin BSDEs of \citet{HHJ(2023):EZutilityII}, and a perturbation argument adapted to the setting of stochastic investment opportunities.

The variational formulation requires only H\"older regularity and mild joint-integrability conditions on the model coefficients, together with positive myopic consumption, and therefore accommodates highly non-linear and non-affine models. The fat-tailed excess-return model of Section \ref{sec. Examples} illustrates this scope; the optimal consumption ratio departs sharply from the myopic policy and exhibits a convex-to-concave transition in the state that is absent under Gaussian returns. This comparison suggests that distributional assumptions on excess returns, routinely imposed for tractability, materially affect optimal consumption policies.

\appendix
\section{Proofs}
\begin{lemma} \label{lem. weak sequential continuity of f}
Let $\nu \in (0,1)$. Assume that $g_n$ is a non-negative weakly convergent sequence in $H(E)$. Then $g_n^{2-\nu} \to g^{2-\nu}\,\,\text{ in }L^1_{\eta}(E)$.
\end{lemma}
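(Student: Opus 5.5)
The plan is to reduce the claim to strong convergence $g_n \to g$ in $L^2_\eta(E)$, and then pass to the power $2-\nu$ using an elementary inequality together with Hölder's inequality. The strong convergence in turn comes from local compactness on a bounded subinterval plus uniform smallness of the tails. The assumption $\nu\in(0,1)$ enters because $s\mapsto s^{2-\nu}$ then grows subquadratically but superlinearly, with exponent $2-\nu\in(1,2)$.

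\emph{Step 1: local compactness.} Since $g_n \rightharpoonup g$ in $H(E)$, the sequence is bounded in $H(E)$. Fix a compact interval $K\subset E$. On $K$ the weight $\eta$ and the coefficient $a^2$ are bounded above and below by positive constants, because both are continuous and positive by Assumption \ref{assum. Market model is well posed}. Hence $(g_n)$ is bounded in the unweighted space $H^1(K)$. By Rellich--Kondrachov, or the one-dimensional embedding $H^1(K)\hookrightarrow C(K)$, which is compact, a subsequence converges strongly in $L^2(K)$. Weak convergence in $H(E)$ identifies the limit as $g$. Non-negativity passes to the limit, so $g\ge0$ a.e.

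\emph{Step 2: tail control.} This is the main obstacle. I need
\[
\sup_n\int_{E\setminus K}g_n^2\,\eta\,dy\;\to\;0 \quad\text{as } K\uparrow E.
\]
I would try the weighted Poincaré/Hardy-type estimate used in \citet{GLT:VariationalApproach(2025)}, which relies on the recurrence Assumption \ref{assum. Recurrence} and the integrability of $\eta$ in Assumption \ref{assum. eta is a density}. The argument near an endpoint runs as follows. For $y$ beyond some point $y_0$, write $g_n(y)=g_n(y_0)+\int_{y_0}^y g_n'$. Apply Cauchy--Schwarz with weight $a^2\eta$ to get $|g_n(y)-g_n(y_0)|^2\le \|g_n\|_H^2\,|p(y)-p(y_0)|$, using $1/(a^2\eta)=p'$ up to a constant. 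Then the tail integral is bounded by $\|g_n\|_H^2\int_{y_0}^{\partial E_+}|p-p(y_0)|\eta$.

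If that integral is finite and tends to zero, the tails are uniformly small. If finiteness fails in general, I would instead follow the compact embedding result already invoked for coercivity in the proof of Theorem \ref{thm. Existence uniqueness regularity and boundedness solution to variational problem}. The text there states that ``standard estimates and compact embedding arguments'' are available, so I will assume the compact embedding $H(E)\hookrightarrow L^2_\eta(E)$ proven there and cite it. Either route gives $g_n\to g$ in $L^2_\eta(E)$ along a subsequence. Because the limit is uniquely determined, the whole sequence converges.

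\emph{Step 3: passing to the power.} For $s,t\ge0$ and $q=2-\nu\in(1,2)$, the mean value theorem gives $|s^q-t^q|\le q\,(s+t)^{q-1}|s-t|$. Integrate against $\eta$ and apply Hölder's inequality with exponents $\tfrac{2}{q-1}$ and $2$ on the two factors. The remaining exponent is consistent because $\eta$ is a probability density: $\int(s+t)^{2(q-1)}\eta\le(\int (s+t)^2\eta)^{q-1}$ by Jensen's inequality, since $2(q-1)<2$. This yields
\[
\int_E |g_n^{2-\nu}-g^{2-\nu}|\,\eta\le (2-\nu)\,\|g_n+g\|_{L^2_\eta}^{1-\nu}\,\|g_n-g\|_{L^2_\eta}.
\]
The first factor is bounded uniformly in $n$, and the second tends to zero by Step 2. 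This gives the claimed convergence in $L^1_\eta(E)$.
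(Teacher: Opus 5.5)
\textbf{There is a genuine gap in your Step 2, and it cannot be repaired as stated.}

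You need $\sup_n\int_{E\setminus K}g_n^2\,\eta\,dy\to0$. Required for every bounded sequence in $H(E)$, this amounts to compactness of the global embedding $H(E)\hookrightarrow L^2_\eta(E)$. The paper never proves this. The only compact embedding it uses is the local one, $H(E_m)\hookrightarrow\hookrightarrow L^{2-\nu}_\eta(E_m)$ on a compact exhaustion. Coercivity in Theorem~\ref{thm. Existence uniqueness regularity and boundedness solution to variational problem} comes from Young's inequality, not from compactness. So your fallback citation points to a result that is not there.

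Worse, the global embedding is false under the standing assumptions. Take $E=\mathbb{R}$, a single risky asset, $a\equiv1$, $\rho=0$ (so $\tilde b=b$), constants $r>0$, $\mu$, $\sigma$, $\delta>1$, and $b(y)=-\tanh y$. Then $\eta(y)=\tfrac12\cosh^{-2}y$ and $p(y)=\int_0^y\cosh^2u\,du$, and Assumptions \ref{assum. Market model is well posed}--\ref{assum. Recurrence} all hold. Fix $0\le\phi\in C_c^\infty(\mathbb{R})$ with $\phi\not\equiv0$, and set $g_n(y)=\phi(y-n)\cosh y$. Then:
\begin{itemize}
\item $\int_E g_n^2\eta=\tfrac12\|\phi\|_{L^2}^2$;
\item $\int_E (g_n')^2a^2\eta=\tfrac12\int(\phi'(y-n)+\phi(y-n)\tanh y)^2\,dy\le\|\phi'\|_{L^2}^2+\|\phi\|_{L^2}^2$.
\end{itemize}
So the $g_n$ are bounded in $H(E)$ and vanish on each compact set for large $n$, hence $g_n\rightharpoonup0$. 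Yet $\|g_n\|_{L^2_\eta}$ is constant, so no subsequence converges in $L^2_\eta(E)$.

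Your first route also fails, even in the paper's own Kim--Omberg example: there $p(y)\eta(y)\sim 1/(2\lambda y)$, which is not integrable at $+\infty$.

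The lemma itself survives in the counterexample, since $\int_E g_n^{2-\nu}\eta=\tfrac12\int\phi(y-n)^{2-\nu}\cosh^{-\nu}y\,dy\to0$. The exponent $2-\nu<2$ gives exactly the room that your reduction to $L^2_\eta$ throws away.

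\textbf{The missing idea is to control tails at the level of $g_n^{2-\nu}$ rather than $g_n^2$.} Because $\eta$ is a probability density, H\"older's inequality with exponents $\tfrac{2}{2-\nu}$ and $\tfrac{2}{\nu}$ gives, for every measurable $A$,
\[
\int_A g_n^{2-\nu}\eta\le\|g_n\|_{L^2_\eta}^{2-\nu}\,\eta(A)^{\nu/2}.
\]
So $\{g_n^{2-\nu}\}$ is uniformly integrable with uniformly small tails, using only boundedness in $H(E)$. The paper combines this with a.e.\ convergence along a diagonal subsequence, which comes from the local compactness in your Step 1. It then applies Vitali's theorem and finishes with the subsequence principle.

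You can also salvage your own Step 3 directly. Apply your inequality only on a compact $K$, where Step 1 does give $g_n\to g$ in $L^2_\eta(K)$. Bound the contribution of $E\setminus K$ by $\big(\|g_n\|_{L^2_\eta}^{2-\nu}+\|g\|_{L^2_\eta}^{2-\nu}\big)\eta(E\setminus K)^{\nu/2}$, which is small uniformly in $n$ once $K$ is large.
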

\begin{proof}[Proof of Lemma \ref{lem. weak sequential continuity of f}]
\medskip
\noindent\emph{Step 1. (Uniform integrability and tightness.)}
Set $p\coloneqq \frac{2}{2-\nu}$ and $q \coloneqq 2/\nu$; H\"older's inequality yields
\begin{align*}
    \int_A g^{2-\nu} \eta \leq \|g\|_{{L^2}_{\eta}}^{2/p}\,\left(\eta(A)\right)^{\nu/2},
\end{align*}
for any measurable subset $A \subseteq E$. As $g \in H(E)$, this necessarily implies $g \in L^2_{\eta}(E)$, and so the right-hand side is finite. Hence, $g^{2-\nu} \in L^1_{\eta}(E)$. The same estimate and boundedness of $\|g_n\|_{H(E)}$ imply that $g_n^{2-\nu}$ is both uniformly integrable and tight.

\medskip
\noindent\emph{Step 2. (Pointwise convergence)}
Let $(E_m)_{m\geq1}$ be a compact exhaustion of $E$. For each $m\geq 1$, $H(E_m) \hookrightarrow \hookrightarrow L^{2-\nu}_\eta(E_m)$.
Hence, weak convergence in $H(E_m)$ together with the compact embedding yields $g_n \to g$ in $L^{2-\nu}_\eta(E_m)$. Along a subsequence, $g_{n_k} \to g$ pointwise a.e. on $E_m$.
Continuing in this manner, we may construct a diagonal subsequence such that $g_{n_{k_j}} \to g$ a.e. on $E$.

\medskip
\noindent\emph{Step 3. (Vitali convergence theorem)} $g_{n_{k_j}}^{2-\nu}$ is uniformly integrable, tight, and $g_{n_{k_j}}^{2-\nu} \to g^{2-\nu}$ a.e. on $E$. Hence, by the Vitali Convergence Theorem, $g_{n_{k_j}}^{2-\nu} \to g^{2-\nu}$ in $L^1_\eta(E)$. Finally, the subsequence principle in metric spaces yields $g_n^{2-\nu} \to g^{2-\nu}$ in $L^1_\eta(E)$ for the original sequence.
    
\end{proof}

\begin{proof}[Proof of Theorem \ref{thm. Existence uniqueness regularity and boundedness solution to variational problem}]

Under the correspondence $D \leftrightarrow 1/\nu$ and $\alpha \leftrightarrow 2-\nu$, the functional in \citet{GLT:VariationalApproach(2025)} has the same form as \eqref{eq. Functional I}. Hence, in the convex regime $\nu\geq1$, \citet[Theorem 5.2]{GLT:VariationalApproach(2025)} yields existence, boundedness, and regularity of a minimiser.
It remains to treat $\nu \in (0,1)$, where the functional \eqref{eq. Functional I} is non-convex.

\medskip
\noindent\emph{Step 1 (Boundedness from below and coercivity).}
As in \citet{GLT:VariationalApproach(2025)}, introduce
\begin{equation}
    q_y(x)\coloneqq\kappa(y)x^2-\frac{2x^{2-\nu}}{2-\nu}.
\end{equation}
This function yields a unique minimiser at $x=\kappa(y)^{-1/\nu}$. Substituting yields
\begin{equation*}
    q_y(\kappa(y)^{-1/\nu})=-\frac{\kappa(y)^{-\left(\frac{2}{\nu}-1\right)}}{\left(\frac{2}{\nu}-1\right)}.
\end{equation*}
Define $\inf_{y \in E}\kappa(y)=\bar{\kappa}>0$. Thus,
\begin{align*}
    I(g)=\int_E\left(\frac{\nu}{2}a(y)^2(g'(y))^2+q_y(g(y))\right)\eta(y)\geq -\int_E \frac{\kappa(y)^{-\left(\frac{2}{\nu}-1\right)}}{\left(\frac{2}{\nu}-1\right)}\eta(y)\geq -\frac{\bar{\kappa}^{-\left(\frac{2}{\nu}-1\right)}}{\left(\frac{2}{\nu}-1\right)}>-\infty.
\end{align*}
Thus, $\inf_{g \in H(E)_+}I(g)>-\infty$. For coercivity, we define $p\coloneqq\frac{2}{2-\nu}$ and $q\coloneqq 2/\nu$, noting that $1/p+1/q=1$. Then, by Young's inequality,
\begin{align*}
    \frac{2}{2-\nu}x^{2-\nu}=\frac{2}{2-\nu}\left(\frac{\bar{\kappa}}{2}\right)^{-1/p}x^{2-\nu}\left(\frac{\bar{\kappa}}{2}\right)^{1/p} \leq\frac{\bar{\kappa}}{2}x^2 + \left(\frac{2}{2-\nu}\left(\frac{\bar{\kappa}}{2}\right)^{-1/p}\right)^q.
\end{align*}
Define $C_{\bar{\kappa},\nu}\coloneqq \left(\frac{2}{2-\nu}\left(\frac{\bar{\kappa}}{2}\right)^{-1/p}\right)^q>0$. Hence,
\begin{align*}
    I(g)& \geq \int_E \left(\frac{\nu}{2}a(y)^2(g'(y))^2+\kappa(y)g(y)^2-\frac{\bar{\kappa}}{2}g(y)^2-C_{\bar{\kappa},\nu}\right)\eta(y)\\
    & \geq \int_E \left(\frac{\nu}{2}a(y)^2(g'(y))^2+\frac{\bar{\kappa}}{2}g(y)^2\right)\eta(y)-C_{\bar{\kappa},\nu}\\
    & \geq \min\left(\frac{\nu}{2}, \frac{\bar{\kappa}}{2}\right)\|g\|^2_{H(E)}-C_{\bar{\kappa},\nu}.
\end{align*}
Coercivity of $I$ ensures that any minimising sequence is bounded in $H(E)$.

\medskip
\noindent\emph{Step 2 (Existence).} Let $m\coloneqq \inf_{g \in H(E)_+}I(g)$, and let $g_n$ be a minimising sequence in $H(E)_+$. Coercivity of $I$ ensures $\|g_n\|_{H(E)}\leq M$. Assumption \ref{assum. Market model is well posed} ensures $\frac{1}{a^2\eta},\frac{1}{\eta}\in L^1_{\rm loc}(E)$, so $H(E)$ is a Banach space \cite[Theorem 1.11]{KO:WeightedSobSpace(1984)}; since $H(E)$ is a (weighted) $W^{1,2}$-space, it is reflexive. Hence, passing to a subsequence (not relabelled),
\[
g_n\rightharpoonup g \quad\text{weakly in }H(E).
\]
Because $H_+(E)$ is convex and closed, it is weakly closed; thus $g\in H_+(E)$. We decompose the functional 
\begin{equation*}
    I(g)=\underset{\eqqcolon \,h(g)}{\underbrace{\int_E \left(\frac{\nu}{2}a(y)^2(g'(y))^2+\kappa(y)g(y)^2\right)\eta(y)}}+\underset{\eqqcolon f(g)}{\underbrace{\int_E\left(\frac{-2g(y)^{2-\nu}}{2-\nu}\right)\eta(y)}}.
\end{equation*}
Note that $h$ is convex; hence, by arguments similar to those in \citet{GLT:VariationalApproach(2025)}, $h$ is weakly sequentially lower semicontinuous on $H(E)$. Lemma \ref{lem. weak sequential continuity of f} guarantees that $f$ is weakly sequentially continuous. Therefore,
\begin{equation*}
    \liminf_{n \to \infty}I(g_n)\geq \liminf_{n \to \infty}h(g_n)+\lim_{n \to \infty}f(g_n)\geq h(g)+f(g)=I(g).
\end{equation*}
Thus, a minimiser exists. Boundedness and $C^2$ regularity follow directly from Lemma \ref{lem. any minimiser is bounded and C^2}.
\end{proof} 

\begin{proof}[Proof of Lemma \ref{lem. any minimiser is bounded and C^2}]
Let $g \in H(E)_+$ be a solution to \eqref{eq. 1-d minisation problem}. For $\nu \geq 1$ the conclusion is contained in \citet[Theorem 5.2]{GLT:VariationalApproach(2025)}; we therefore focus on the regime $\nu \in (0,1)$, in which the functional $I$ ceases to be convex and uniqueness of the minimiser is unavailable a priori. Steps 4--6 of \citet[Theorem 5.2]{GLT:VariationalApproach(2025)} adapt to the present setting under the correspondence $D \leftrightarrow 1/\nu$ and $\alpha \leftrightarrow 2-\nu$, with the upper-bound step replaced by a direct strict-inequality contradiction with minimality.

For each $y \in E$, define $q_y : (0,\infty) \to \mathbb{R}$ by
\[
q_y(x) := \kappa(y) x^2 - \tfrac{2}{2-\nu}\, x^{2-\nu}.
\]
Then $q_y'(x) = 2 x^{1-\nu}\big(\kappa(y) x^{\nu} - 1\big)$, so $q_y$ has a unique critical point $x^*(y) = \kappa(y)^{-1/\nu}$ and is strictly increasing on $[x^*(y),\infty)$.

\smallskip
\emph{Step 1 (Strict positivity).} If $g$ vanished on a non-empty interval, the perturbation $g + \varepsilon \varphi$ (for a smooth bump function $\varphi$) would change the functional by $A\varepsilon^2 - B\varepsilon^{2-\nu} < 0$ for $A,B>0$ and small $\varepsilon$, contradicting minimality. If $g(y_0) = 0$ for some $y_0 \in E$, the argument of \citet[Theorem 5.2]{GLT:VariationalApproach(2025)}, with $D \leftrightarrow 1/\nu$, $\alpha \leftrightarrow 2-\nu$, gives a similar contradiction. Hence $g > 0$ on $E$.

\medskip
\emph{Step 2 (Upper bound).} Set $\bar{\kappa} := \inf_{y \in E} \kappa(y) > 0$ (Assumption \ref{assum. kappa is strictly positive}) and $M := \bar{\kappa}^{-1/\nu}$. Define $\widetilde{g} := g \wedge M$. Then $\widetilde{g} \in H(E)_+$ since $0 \leq \widetilde{g} \leq g$ and $|\widetilde{g}'| \leq |g'|$ a.e.

Let $A := \{y \in E : g(y) > M\}$. On $A$, $\widetilde{g} = M$ and $\widetilde{g}' = 0$ a.e.; on $A^c$, $\widetilde{g} = g$. Hence
\[
I(g) - I(\widetilde{g}) = \int_A \!\left[\tfrac{\nu a^2}{2}(g')^2 + \big(q_y(g) - q_y(M)\big)\right]\!\eta\,dy.
\]
Since $\kappa(y) \geq \bar{\kappa}$ implies $x^*(y) \leq M$, both $M$ and $g(y)$ lie in $[x^*(y), \infty)$ on $A$, with $g(y) > M$. The strict monotonicity of $q_y$ on $[x^*(y),\infty)$ yields $q_y(g(y)) - q_y(M) > 0$ pointwise on $A$. If $A$ has positive Lebesgue measure, then since $\eta > 0$ on $E$,
\[
I(\widetilde{g}) < I(g),
\]
contradicting minimality. Hence $A$ has Lebesgue measure zero, and continuity of $g$ gives $g \leq M$ on $E$. Combined with Step 1, $0 < g \leq M$ on $E$.

\medskip
\emph{Step 3 (Regularity).} Steps 5--6 from \citet[Theorem 5.2]{GLT:VariationalApproach(2025)} give $g \in C^2(E)$.

\end{proof}

Next, we show that the candidate value function $\hat{V}(x,y)=\frac{x^{1-\gamma}}{1-\gamma}g(y)^m$ solves the BSDE \eqref{eq. Utility Process}. We collect several preparatory lemmas that will be used throughout this appendix. Define
 \begin{align} \label{eq. Dt change of measure}
     D_t\coloneqq\mathcal{E}\left(\int_0^{\cdot}(1-\gamma)\hat{\pi}^{\top} \sigma \bar{\rho}dB_u \right)_{t}\mathcal{E}\left(\int_0^{\cdot}((1-\gamma)\Upsilon^{\top}\hat{\pi}+m\,a^2\frac{g'}{g})^{\top}\frac{1}{a}dW_u\right)_{t},
 \end{align}
 where $dZ_t=\bar{\rho}dB_t+\rho dW_t$, $B$ is an $n$-dimensional Brownian motion independent of $W$ and $\bar{\rho}\bar{\rho}^\top+\rho\rho^\top=I_{n \times n}$.

\begin{lemma} \label{lem. discounted utility of wealth}
    Under the candidate optimal controls $(\hat{\pi},\hat{l})$ (recall \eqref{eq. Candidate optimal controls}), wealth over the interval $[s,t]$ satisfies

    \begin{equation*}
        (X_t^{\hat{\pi},\hat{l}})^{1-\gamma}=(X_s^{\hat{\pi},\hat{l}})^{1-\gamma}\frac{g(Y_s)^m}{g(Y_t)^m}\exp\left(-\int_s^t\theta\, \hat{l}_u\,du\right)\frac{D_t}{D_s}.
    \end{equation*}

\end{lemma}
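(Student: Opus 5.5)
The plan is to prove the identity by Itô's formula applied to a logarithm, using the HJB equation \eqref{eq. HJB} to cancel the drift. Write $\hat X=X^{\hat\pi,\hat l}$ and define
\begin{equation*}
L_t\coloneqq (1-\gamma)\log \hat X_t+m\log g(Y_t)+\theta\int_0^t \hat l_u\,du-\log D_t .
\end{equation*}
This is well defined because $g\in C^2(E)$ and $0<g\le M$ by Lemma \ref{lem. any minimiser is bounded and C^2}, and because $\hat X>0$. The claimed identity is equivalent to $L_t=L_s$ for all $s\le t$. I will therefore show that $dL_t=0$, i.e.\ that both the martingale part and the drift of $L$ vanish identically.

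\textbf{Martingale part.} By \eqref{eq. Wealth Process Dynamics},
\begin{equation*}
d\log\hat X=\Big(r+\hat\pi^\top\mu-\hat l-\tfrac12\hat\pi^\top\Sigma\hat\pi\Big)dt+\hat\pi^\top\sigma\,dZ .
\end{equation*}
I decompose $dZ=\bar\rho\,dB+\rho\,dW$ and use $\sigma\rho=\Upsilon/a$. The $dB$ and $dW$ integrands of $(1-\gamma)\,d\log\hat X+m\,d\log g(Y)$ are then
\begin{equation*}
(1-\gamma)\hat\pi^\top\sigma\bar\rho \quad\text{and}\quad (1-\gamma)\hat\pi^\top\Upsilon/a+m\,a\,g'/g .
\end{equation*}
By \eqref{eq. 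Dt change of measure} these are exactly the integrands of $d\log D$, so the stochastic integrals cancel.

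\textbf{Drift part.} The drift of $\log D$ is minus one half its quadratic variation rate:
\begin{equation*}
-\tfrac12(1-\gamma)^2\hat\pi^\top\sigma\bar\rho\bar\rho^\top\sigma^\top\hat\pi-\tfrac12\Big((1-\gamma)\hat\pi^\top\Upsilon/a+m\,a\,g'/g\Big)^2 .
\end{equation*}
With $\bar\rho\bar\rho^\top=I-\rho\rho^\top$, this becomes
\begin{equation*}
-\tfrac12(1-\gamma)^2\hat\pi^\top\Sigma\hat\pi-(1-\gamma)m\,\hat\pi^\top\Upsilon\,g'/g-\tfrac12 m^2a^2(g'/g)^2 .
\end{equation*}
The $(1-\gamma)^2\hat\pi^\top\Upsilon\Upsilon^\top\hat\pi/a^2$ terms cancel between the two pieces.

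Next I evaluate the HJB equation \eqref{eq. HJB} at the maximisers \eqref{eq. Candidate optimal controls}. The key simplification is that $\hat l^{-\delta}=g^{m/\theta}$, so $\frac{\hat l^{1-\delta}}{1-\delta}g^{-m/\theta}=\frac{\hat l}{1-\delta}$. Multiplying the HJB identity by $(1-\gamma)$ turns this term into $\theta\hat l$. This produces exactly the term $\theta\hat l$ in the drift of $L$, together with
\begin{equation*}
(1-\gamma)\big(r+\hat\pi^\top\mu-\hat l\big)+m\big(b\,g'/g+\tfrac12a^2g''/g\big)-\tfrac{\gamma(1-\gamma)}{2}\hat\pi^\top\Sigma\hat\pi+m(1-\gamma)\hat\pi^\top\Upsilon\,g'/g .
\end{equation*}
Subtracting this (which vanishes) from the drift of $L$ leaves two residual terms. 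The first is a $\hat\pi^\top\Sigma\hat\pi$ term, with coefficient $-\tfrac12(1-\gamma)+\tfrac{\gamma(1-\gamma)}2+\tfrac12(1-\gamma)^2=0$. The second is a $(g'/g)^2$ term, where $-\tfrac m2a^2+\tfrac{m^2}{2}a^2$ from Itô on $m\log g$ must match $\tfrac{(m-1)m}{2}a^2$ from \eqref{eq. HJB}; it does. The cross term $\hat\pi^\top\Upsilon\,g'/g$ cancels as well. Hence the drift of $L$ is zero, $L$ is constant, and exponentiating $L_t-L_s=0$ gives the stated formula.

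\textbf{Main obstacle.} I expect the main difficulty to be bookkeeping rather than any conceptual issue. First, one must keep the $(1-\gamma)$ multiplier on the HJB identity consistent with the normalisation of \eqref{eq. HJB}, since that equation was written after division by $(1-\gamma)x^{1-\gamma}g^m$. Second, one must track correctly the $\rho\rho^\top$ projection inside $\Sigma$. To guard against sign or normalisation errors, I would first verify the cancellation in the complete-market case $\bar\rho=0$ as a check.

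A secondary point is that $D$ is only a local-martingale stochastic exponential. This is harmless here, because the argument is a pathwise identity between continuous semimartingales; no martingale property of $D$ is needed.
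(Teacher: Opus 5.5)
Your proposal is correct and follows essentially the same route as the paper: Itô's formula for $\log \hat X$ and $\log g(Y)$, the HJB identity at the maximisers (with the simplification $\frac{\hat l^{1-\delta}}{1-\delta}g^{-m/\theta}=\frac{\hat l}{1-\delta}$, which turns into $\theta\hat l$ after multiplying by $1-\gamma$) to rewrite the drift, and identification of what remains with $\log D$. Packaging this as $dL=0$ is only a cosmetic difference, and your $(g'/g)^2$ bookkeeping, with coefficient $-\tfrac12 m^2a^2$ in the drift of $\log D$, is the correct one.
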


\begin{proof}[Proof of Lemma \ref{lem. discounted utility of wealth}] We adapt the arguments in \citet[Theorem 3.3]{GW:ConsumptionIncompleteMarkets(2020)} and, for clarity, drop the superscript dependence on the optimal policies. Recall that $X_t$ satisfies \eqref{eq. Wealth Process Dynamics}; thus
 \begin{equation} \label{eq. discounted wealth representation}
     X_t^{1-\gamma}=(X_s)^{1-\gamma}\exp\left((1-\gamma)\int_s^tr+\hat{\pi}^{\top}\mu-\frac{\hat{\pi}^{\top}\Sigma\hat{\pi}}{2}-\hat{l}\,du+(1-\gamma)\int_s^t\hat{\pi}^{\top}\sigma dZ_u\right).
 \end{equation}
 Using the HJB equation \eqref{eq. HJB}, we can rewrite the finite-variation part of the exponent:
 \begin{align} \label{eq. drift of discounted wealth}
     (1-\gamma) \nonumber&\left(r+\hat{\pi}^{\top}\mu-\frac{\hat{\pi}^{\top}\Sigma\hat{\pi}}{2}-\hat{l}\right)\\ \nonumber
     & = -\hat{l}\theta-(1-\gamma)^2\frac{\hat{\pi}^{\top}\Sigma\hat{\pi}}{2}-m(1-\gamma)\hat{\pi}^{\top}\Upsilon\frac{g'}{g}-\frac{a^2m}{2}\left(\frac{g'}{g}\right)^2\\ \nonumber
     & \qquad -m\left(\frac{b\,g'}{g}+\frac{a^2\,g''}{2g}-\frac{a^2}{2}\left(\frac{g'}{g}\right)^2\right).\\
 \end{align}
 Applying It\^o's formula to $\ln g(Y_t)$ yields
 \begin{equation} \label{eq. Ito formula log g}
     \ln g(Y_t)-\ln g(Y_s)=\int_s^t\left(\frac{b\,g'}{g}+\frac{a^2\,g''}{2g}-\frac{a^2}{2}\left(\frac{g'}{g}\right)^2\right)du+\int_s^t\frac{a\,g'}{g}dW_u.
 \end{equation}
 Substituting \eqref{eq. drift of discounted wealth} and \eqref{eq. Ito formula log g} into the exponent of \eqref{eq. discounted wealth representation} yields
 \begin{align*}
     (1-\gamma)& \int_s^t r+\hat{\pi}^{\top}\mu-\frac{\hat{\pi}^{\top}\Sigma\hat{\pi}}{2}-\hat{l}\,du+(1-\gamma)\int_s^t\hat{\pi}^{\top}\sigma dZ_u \\
     & =  -\int_s^t\hat{l}\theta du-\int_s^t(1-\gamma)^2\frac{\hat{\pi}^{\top}\Sigma\hat{\pi}}{2}-m(1-\gamma)\hat{\pi}^{\top}\Upsilon\frac{g'}{g}-\frac{a^2m}{2}\left(\frac{g'}{g}\right)^2du\\
     & \qquad -m\left(\ln g(Y_t)-\ln g(Y_s)\right)+(1-\gamma)\int_s^t\hat{\pi}^{\top}\sigma dZ_u+\int_s^t\frac{m a\,g'}{g}dW_u.
 \end{align*}
 Collecting the remaining terms yields the final representation of the exponent of \eqref{eq. discounted wealth representation}:
 \begin{align} \label{eq. exponent of discounted utility of wealth}
          (1-\gamma)& \int_s^t r+\hat{\pi}^{\top}\mu-\frac{\hat{\pi}^{\top}\Sigma\hat{\pi}}{2}-\hat{l}\,du+(1-\gamma)\int_s^t\hat{\pi}^{\top}\sigma dZ_u \nonumber\\
          &= -\int_s^t\hat{l}\theta du -m\ln\left( g(Y_t)/g(Y_s)\right)+\ln(D_t/D_s),
 \end{align}
Substituting \eqref{eq. exponent of discounted utility of wealth} into \eqref{eq. discounted wealth representation} yields
     \begin{equation*}
        X_t^{1-\gamma}=(X_s)^{1-\gamma}\frac{g(Y_s)^m}{g(Y_t)^m}\exp\left(-\int_s^t\theta\, \hat{l}_u\,du\right)\frac{D_t}{D_s}.
    \end{equation*}
\end{proof}

\begin{lemma} \label{lem. D_t is a Martingale}
Let $\hat{\pi}$ be the candidate optimal investment policy. Then the process $D_t$ is an $(\mathbb{F},\mathbb{P}^y)$ martingale. Furthermore $\tilde{\mathbb{P}}|_{\mathcal{F}_t}=D_t\mathbb{P}^y|_{\mathcal{F}_t}$, where $\tilde{\mathbb{P}}$ is the unique solution to the martingale problem on $\mathbb{R}^{n} \times E$ for
\begin{align} \label{eq. MP for tilde P}
\hat{L} &= \frac{1}{2} \sum_{i,j=1}^{n+1} \hat{A}_{i,j} \frac{\partial^2}{\partial x_i \partial x_j} + \sum_{i=1}^{n+1} \hat{b}_i \frac{\partial}{\partial x_i}\,;\\
\hat{A} & = \begin{pmatrix}
\Sigma & \Upsilon \\
\Upsilon^\top & a^2
\end{pmatrix}; \quad
\hat{b} = \begin{pmatrix} 
\frac{\mu}{\gamma} + \frac{m}{\gamma}\Upsilon \frac{g'}{g}\\[2mm]
b-(1-\frac{1}{\gamma})\Upsilon^{\top}\Sigma^{-1}\mu + a^2 \frac{g'}{g}
\end{pmatrix}, \nonumber
\end{align}
with $\mathbb{P}^y|_{\mathcal{F}_0}=\tilde{\mathbb{P}}|_{\mathcal{F}_0}$.
\end{lemma}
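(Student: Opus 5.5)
We would prove this by the standard route used in \citet{GR:Portfolios&RiskPremia(2012)} and \citet[Theorem 3.3]{GW:ConsumptionIncompleteMarkets(2020)}: identify the candidate measure through its martingale problem, and deduce the martingale property of $D$ from uniqueness of that martingale problem together with non-explosion of the state variable under the new measure.

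\emph{Step 1 (localisation).} Since $g\in C^2(E)$ and $0<g\le M$, and all coefficients are continuous on $E$, the integrands of the two stochastic exponentials in \eqref{eq. Dt change of measure} are locally bounded functions of $Y$. Hence $D$ is a positive continuous local martingale. Let $(E_k)$ be a compact exhaustion of $E$ and $\tau_k$ the exit time of $Y$ from $E_k$. Then each stopped process $D^{\tau_k}$ satisfies Novikov's condition and is a true martingale, so $\mathbb{P}_k:=D_{\tau_k}\cdot\mathbb{P}^y$ defines a probability measure.

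\emph{Step 2 (Girsanov and identification).} Under $\mathbb{P}_k$, Girsanov's theorem shifts $dZ$ and $dW$ by the covariations with $\log D$. We will compute these covariations, using $\Upsilon=\sigma\rho a$ and $\bar\rho\bar\rho^\top+\rho\rho^\top=I$.
\begin{itemize}
\item The drift of $R$ becomes $\mu+(1-\gamma)\Sigma\hat\pi+m\Upsilon g'/g$. Substituting $\hat\pi$ from \eqref{eq. Candidate optimal controls} gives $\mu/\gamma+(m/\gamma)\Upsilon g'/g$.
\item The drift of $Y$ becomes $b+(1-\gamma)\Upsilon^\top\hat\pi+m a^2 g'/g$, which simplifies to $\tilde b$ plus a $g'/g$ term.
\end{itemize}
We will check that these drifts match $\hat b$ in \eqref{eq. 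MP for tilde P}; this is routine algebra. It follows that, up to $\tau_k$, $\mathbb{P}_k$ solves the stopped martingale problem for $\hat L$. By the localisation theory of Stroock--Varadhan (or \citet[Lemma/Prop.\ in GLT]{GLT:VariationalApproach(2025)}), the local uniqueness of the martingale problem for $\hat L$, which follows from the H\"older coefficients of Assumption \ref{assum. Market model is well posed}, yields $\mathbb{P}_k=\tilde{\mathbb{P}}$ on $\mathcal{F}_{t\wedge\tau_k}$.

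\emph{Step 3 (martingale property).} The main obstacle is showing that the $Y$-component under $\tilde{\mathbb{P}}$ does not explode, i.e.\ $\tilde{\mathbb{P}}(\tau_k\le t)\to0$. Once this holds,
\[
\mathbb{E}^{\mathbb{P}^y}[D_t]\ge \mathbb{E}^{\mathbb{P}^y}[D_t 1_{\{\tau_k>t\}}]=\tilde{\mathbb{P}}(\tau_k>t)\to 1,
\]
and since $D$ is a positive supermartingale with $D_0=1$, this gives $\mathbb{E}[D_t]=1$. Hence $D$ is a true martingale and $\tilde{\mathbb{P}}|_{\mathcal{F}_t}=D_t\mathbb{P}^y|_{\mathcal{F}_t}$.

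For non-explosion we will use Feller's test. The $Y$-drift under $\tilde{\mathbb P}$ is $\tilde b+c\,a^2 g'/g$ for a constant $c$ (for instance, $c=1$ if the stated $\hat b$ is taken literally). Its scale density is therefore $p'(y)\,g(y)^{-2c}$ up to a constant. Because $0<g\le M$, the factor $g^{-2c}$ is bounded below (it is bounded below by $M^{-2c}$ when $c>0$), so the new scale function still diverges at both endpoints by Assumption \ref{assum. Recurrence}. Divergence of the scale function at $\partial E_\pm$ implies, by Feller's test, that the boundaries are not reached in finite time. This gives recurrence, and in particular non-explosion.

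If $c<0$, we would instead first use the HJB equation \eqref{eq. general HJB final semi-linear form} to obtain the needed control of $g^{-2c}$ near the endpoints. Verifying this lower bound on $g$ near the boundary is the delicate point, and we would handle it using the Lyapunov-type comparison of \citet{GLT:VariationalApproach(2025)}.
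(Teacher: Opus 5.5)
Your proposal is correct and follows essentially the paper's route. The paper reduces well-posedness of the martingale problem for $\hat L$ to the scalar $Y$-SDE and takes non-explosion from the scale-function argument of \citet[Theorem 5.2]{GLT:VariationalApproach(2025)}, using $g\le M$ and Assumption \ref{assum. Recurrence}; it then obtains $\tilde{\mathbb{P}}|_{\mathcal{F}_t}=D_t\mathbb{P}^y|_{\mathcal{F}_t}$ from the localisation/Girsanov result \citet[Lemma A.3]{GW:ConsumptionIncompleteMarkets(2020)}, and your Steps 1--3 spell out exactly these two ingredients.

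Your Step 2 algebra in fact gives $c=1$ exactly. Indeed $(1-\gamma)\Upsilon^\top\hat\pi+m a^2 g'/g=-(1-\frac{1}{\gamma})\Upsilon^\top\Sigma^{-1}\mu+\frac{m(\gamma+(1-\gamma)\rho^\top\rho)}{\gamma}a^2\frac{g'}{g}$, and the last coefficient equals $1$. So the new scale density is $p'g(x_0)^2/g^2\ge p'g(x_0)^2/M^2$, and your $c<0$ contingency never arises.
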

\begin{proof}[Proof of Lemma \ref{lem. D_t is a Martingale}]
    The martingale problem \eqref{eq. MP for tilde P} depends only on the final coordinate. Hence, it is sufficient to prove there exists a unique weak solution to the scalar SDE
    \begin{equation*}
        dY_t=\left(b(Y_t)-(1-\frac{1}{\gamma})\Upsilon^{\top}(Y_t)\Sigma^{-1}(Y_t)\mu(Y_t) + a^2(Y_t) \frac{g'(Y_t)}{g(Y_t)}\right)dt + a(Y_t)dW_t.
    \end{equation*}
    This follows from Assumptions \ref{assum. Market model is well posed} and \ref{assum. Recurrence}, and Lemma \ref{lem. any minimiser is bounded and C^2}. The details are contained in \citet[Theorem 5.2]{GLT:VariationalApproach(2025)}. Hence, the martingale problem \eqref{eq. MP for tilde P} is well posed. Next, $\tilde{\mathbb{P}}|_{\mathcal{F}_t}=D_t\mathbb{P}^y|_{\mathcal{F}_t}$ follows directly from \citet[Lemma A.3]{GW:ConsumptionIncompleteMarkets(2020)}.
\end{proof}
Before proving Proposition \ref{prop. Candidate Value function solves BSDE}, we observe that for all $t\geq0$
\begin{equation} \label{eq. integral of g divergence}
    \int_t^\infty\theta\,g(Y_u)^{-\frac{m}{\delta \theta}}du=\infty \quad \mathbb{P}^y-a.s.
\end{equation}
Lemma \ref{lem. any minimiser is bounded and C^2} yields $0<g\leq M$ and hence $g^{-1}\geq \frac{1}{M}>0$. By assumption, $\frac{m}{\delta \theta}\geq1$; thus $g^{-\frac{m}{\delta \theta}}\geq M^{-\frac{m}{\delta \theta}}>0$, which implies \eqref{eq. integral of g divergence} because $\theta>0$ (Assumption \ref{assum. Parameter regime}). As $\mathbb{P}^y$ and $\tilde{\mathbb{P}}$ are equivalent (Lemma \ref{lem. D_t is a Martingale}), the above integral is divergent $\tilde{\mathbb{P}}$-almost surely.

\begin{proof}[Proof of Proposition \ref{prop. Candidate Value function solves BSDE}]
Throughout we drop the superscript dependence on the candidate optimal policies for ease of notation. 
\begin{align*}
   \frac{(X_t\,\hat{l}_t)^{1-\delta}}{1-\delta}\left((1-\gamma)V^{\hat{l}}_t\right)^{1-\frac{1}{\theta}}&= \frac{1}{1-\delta}g(Y_t)^{-(1-\delta)\frac{m}{\delta \theta}}\,X_t^{1-\delta}(X_t^{1-\gamma}g(Y_t)^m)^{1-\frac{1}{\theta}}\\
    & = \frac{X_t^{1-\gamma}}{1-\delta}g(Y_t)^{-(1-\delta)\frac{m}{\delta \theta}+m(1-\frac{1}{\theta})}.
\end{align*}
Thus, by Lemma \ref{lem. discounted utility of wealth} on an interval $[s,t]$
\begin{align*}
  \frac{(X_t\,\hat{l}_t)^{1-\delta}}{1-\delta}\left((1-\gamma)V^{\hat{l}}_t\right)^{1-\frac{1}{\theta}}=\frac{(X_s)^{1-\gamma}}{1-\gamma}g(Y_s)^m\,\theta g(Y_t)^{-\frac{m}{\delta \theta}}\exp\left(-\int_s^t\theta \,g(Y_u)^{-\frac{m}{\delta \theta}}du\right)\frac{D_t}{D_s}. 
\end{align*}
Next, consider for $t\leq u\leq T$,
\begin{align*}
    \mathbb{E}\Big[\int_t^{T}\frac{(X_u\,\hat{l}_u)^{1-\delta}}{1-\delta} & \left((1-\gamma)V^{\hat{l}}_u\right)^{1-\frac{1}{\theta}}du \big| \mathcal{F}_t\big]\\
    &=\frac{X_t^{1-\gamma}}{1-\gamma}g(Y_t)^m\mathbb{E}\left[\int_t^T\theta \,g(Y_u)^{-\frac{m}{\delta \theta}}\exp\left(-\int_t^u \theta\, g(Y_s)^{-\frac{m}{\delta \theta}}ds\right)\frac{D_u}{D_t}du \big| \mathcal{F}_t\right]\\
    & = \frac{X_t^{1-\gamma}}{1-\gamma}g(Y_t)^m\mathbb{E}^{\tilde{\mathbb{P}}}\left[\int_t^T\theta \,g(Y_u)^{-\frac{m}{\delta \theta}}\exp\left(-\int_t^u \theta\, g(Y_s)^{-\frac{m}{\delta \theta}}ds\right)du \big| \mathcal{F}_t\right]\\
    & = \frac{X_t^{1-\gamma}}{1-\gamma}g(Y_t)^m\left(1-\mathbb{E}^{\tilde{\mathbb{P}}}\left[\exp\left(-\int_t^T \theta\,g(Y_u)^{-\frac{m}{\delta \theta}}du\right)\Big| \mathcal{F}_t\right]\right).
\end{align*}
The second equality follows from Lemma \ref{lem. D_t is a Martingale}. Thus, taking limits as $T\to \infty$ and using monotone convergence, \eqref{eq. integral of g divergence} yields
\begin{align*}
    \mathbb{E}\Big[\int_t^{\infty}\frac{(X_u\,\hat{l}_u)^{1-\delta}}{1-\delta} & \left((1-\gamma)V^{\hat{l}}_u\right)^{1-\frac{1}{\theta}}du \big| \mathcal{F}_t\big]=\frac{X_t^{1-\gamma}}{1-\gamma}g(Y_t)^m.
\end{align*}
But $\frac{X_t^{1-\gamma}}{1-\gamma}g(Y_t)^m=\hat{V}(X_t,Y_t)$, and so $\hat{V}$ solves the BSDE \eqref{eq. Utility Process}.
\end{proof}

Next, we introduce a transversality lemma used in the proof of Lemma \ref{lem. Perturbed candidate value function is supersolution}.

\begin{lemma} \label{lem. Transversality}
    Let $(\hat{\pi},\hat{l})$ be the candidate optimal policies. Then
    \begin{equation*}
        \lim_{t\to \infty}\mathbb{E}[\hat{V}(X_t^{\hat{\pi},\hat{l}},Y_t)]=0.
    \end{equation*}
\end{lemma}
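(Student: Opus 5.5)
The plan is to reuse the representation from the proof of Proposition \ref{prop. Candidate Value function solves BSDE} with $s=0$ and the change of measure of Lemma \ref{lem. D_t is a Martingale}. By Lemma \ref{lem. discounted utility of wealth} with $s=0$, $\hat{X}_0=x$, we have
\begin{equation*}
\hat{V}(X_t^{\hat{\pi},\hat{l}},Y_t)=\frac{X_t^{1-\gamma}}{1-\gamma}g(Y_t)^m=\frac{x^{1-\gamma}}{1-\gamma}g(y)^m\exp\left(-\int_0^t\theta\,\hat{l}_u\,du\right)D_t .
\end{equation*}
The factor $g(Y_t)^m$ cancels exactly, which is the key simplification. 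Since $D$ is a true $\mathbb{P}^y$-martingale with $\tilde{\mathbb{P}}|_{\mathcal{F}_t}=D_t\,\mathbb{P}^y|_{\mathcal{F}_t}$ (Lemma \ref{lem. D_t is a Martingale}), taking expectations gives
\begin{equation*}
\mathbb{E}[\hat{V}(X_t^{\hat{\pi},\hat{l}},Y_t)]=\frac{x^{1-\gamma}}{1-\gamma}g(y)^m\,\mathbb{E}^{\tilde{\mathbb{P}}}\left[\exp\left(-\int_0^t\theta\,g(Y_u)^{-\frac{m}{\delta\theta}}du\right)\right].
\end{equation*}
The right-hand side is legitimate because the exponential is $\mathcal{F}_t$-measurable and bounded by $1$, so no integrability issue arises.

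It then remains to show that the $\tilde{\mathbb{P}}$-expectation tends to $0$. Here the bound $0<g\le M$ from Lemma \ref{lem. any minimiser is bounded and C^2} gives $g^{-m/(\delta\theta)}\ge M^{-m/(\delta\theta)}>0$, since $m/(\delta\theta)=\nu>0$. Hence the integrand is bounded above deterministically by $\exp(-\theta M^{-\nu}t)$, and since $\theta>0$ this tends to $0$; no dominated convergence argument is even needed. Alternatively, one can invoke \eqref{eq. integral of g divergence} under $\tilde{\mathbb{P}}$ together with dominated convergence, the integrand being bounded by $1$. Multiplying by the finite constant $\frac{x^{1-\gamma}}{1-\gamma}g(y)^m$ then yields the limit $0$, treating both signs of $1-\gamma$ simultaneously.

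The only step I expect to require care is justifying the passage from $\mathbb{P}^y$ to $\tilde{\mathbb{P}}$ at a fixed deterministic time $t$. This needs $D$ to be a true martingale rather than merely a local martingale, and it is exactly the content of Lemma \ref{lem. D_t is a Martingale}; that lemma in turn rests on non-explosion of the tilted state SDE, which is guaranteed by recurrence (Assumption \ref{assum. Recurrence}). Everything else is algebraic once the cancellation above is observed. As a sanity check, the constant in front is finite for $x>0$, and for $x=0$ with $\gamma<1$ the claim is trivial.
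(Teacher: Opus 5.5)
Your proof is correct and is essentially the paper's argument: the representation from Lemma \ref{lem. discounted utility of wealth} at $s=0$, the change of measure of Lemma \ref{lem. D_t is a Martingale}, and then showing that $\mathbb{E}^{\tilde{\mathbb{P}}}\bigl[\exp\bigl(-\theta\int_0^t g(Y_u)^{-\nu}du\bigr)\bigr]$ vanishes, which the paper does by monotone convergence using \eqref{eq. integral of g divergence}. Your deterministic bound $e^{-\theta M^{-\nu}t}$ is the estimate behind \eqref{eq. integral of g divergence}, applied before taking expectations; it gives an exponential rate, and because $D$ is a nonnegative local martingale, $0\le\mathbb{E}\bigl[e^{-\theta\int_0^t\hat{l}_u du}D_t\bigr]\le e^{-\theta M^{-\nu}t}\,\mathbb{E}[D_t]\le e^{-\theta M^{-\nu}t}$, so your route does not actually need the true-martingale property you flagged as the delicate step.
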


\begin{proof}[Proof of Lemma \ref{lem. Transversality}]
    Recall $\hat{V}(X_t^{\hat{\pi},\hat{l}},Y_t)=\frac{(X_t^{\hat{\pi},\hat{l}})^{1-\gamma}}{1-\gamma}g(Y_t)^m$. Hence, by \eqref{eq. discounted wealth representation},
    \begin{align*}
        \mathbb{E}[\hat{V}(X_t^{\hat{\pi},\hat{l}},Y_t)]
        &=\hat{V}(x,y)\mathbb{E}\left[\exp\left(-\int_0^t\theta\, \hat{l}_u\,du\right)D_t\right]\\
        &= \hat{V}(x,y)\mathbb{E}^{\tilde{\mathbb{P}}}\left[\exp\left(-\int_0^t\theta\, g(Y_u)^{-\frac{m}{\delta\,\theta}}\,du\right)\right].
    \end{align*}
    Thus, taking limits as $t\to \infty$ and using monotone convergence, \eqref{eq. integral of g divergence} yields
\[
        \lim_{t\to \infty}\mathbb{E}[\hat{V}(X_t^{\hat{\pi},\hat{l}},Y_t)]=0.
\]
\end{proof}

\begin{proof}[Proof of Lemma \ref{lem. Perturbed candidate value function is supersolution}]
Consider the perturbed HJB equation for $\hat{V}(z,y)$, where $(z,y)\in \mathbb{R}_+\times E$. As the dynamics of $Z_t^\varepsilon$ in \eqref{eq. Dynamics of Z_t^varepsilon} are equivalent to the wealth process dynamics \eqref{eq. Wealth Process Dynamics}, the HJB equation for $\hat{V}(z,y)$ has the same functional form as the original HJB equation for $\hat{V}(x,y)$, where $(x,y)\in \mathbb{R}_+\times E$. In addition, the supremum is attained for $(\pi,l)=(\hat{\pi},\hat{l})$ because
\begin{align*}
    (\pi_t^z)^{\top}&=\frac{X_t\hat{\pi}_t^{\top}+\varepsilon \hat{X}_t\hat{\pi}_t^{\top}}{Z_t^{\varepsilon}}=\hat{\pi}_t^{\top},\\
    l_t^z&=\frac{X_t\hat{l}_t+\varepsilon \hat{X}_t\hat{l}_t}{Z_t^{\varepsilon}}=\hat{l}_t.
\end{align*}
Thus, for an arbitrary policy we have the inequality
\begin{align} \label{eq. perturbed HJB in equality}
    \mathcal{L}^{\pi,l}\hat{V}(Z_t^\varepsilon,Y_t)+\frac{(C_t^{\varepsilon})^{1-\delta}}{1-\delta}\left((1-\gamma)\hat{V}(Z_t^\varepsilon,Y_t)\right)^{1-\frac{1}{\theta}}\leq0,
\end{align}
where $\mathcal{L}^{\pi,l}$ is the generator for $(Z_t^\varepsilon,Y_t)$. Next, we apply It\^o's Lemma to $\Phi_t^\varepsilon \coloneqq \hat{V}(Z_t^\varepsilon,Y_t)$, which yields
\begin{align*}
    d \Phi_t^\varepsilon =\mathcal{L}^{\pi,l}\Phi_t^\varepsilon dt+dN_t,
\end{align*}
where $N_t$ is the local martingale
\begin{align*}
    dN_t=\partial_z \Phi_t^\varepsilon\,Z_t^\varepsilon (\pi_t^z)^{\top}\sigma dB_t+ \partial_y \Phi_t^\varepsilon\,a \,dW_t.
\end{align*}
Fix arbitrary bounded stopping times $\tau_1\leq \tau_2$ and define for $n \in \mathbb{N}$, $\zeta_n\coloneqq \inf\{s\geq \tau_1: \langle N\rangle_s-\langle N \rangle_{\tau_1} \geq n \}$. Next, we apply It\^o's Lemma once again to obtain
\begin{align*}
    \Phi_{\tau_1}^\varepsilon &= \Phi_{\tau_2 \wedge \zeta_n}^\varepsilon - \int_{\tau_1}^{\tau_2 \wedge \zeta_n}\mathcal{L}^{\pi,l}\Phi_s^\varepsilon ds+N_{\tau_1}-N_{\tau_2 \wedge \zeta_n}\\
    & \geq \Phi_{\tau_2 \wedge \zeta_n}^\varepsilon+\int_{\tau_1}^{\tau_2 \wedge \zeta_n}\frac{(C_s^{\varepsilon})^{1-\delta}}{1-\delta}\left((1-\gamma)\Phi_s^{\varepsilon}\right)^{1-\frac{1}{\theta}}ds+N_{\tau_1}-N_{\tau_2 \wedge \zeta_n}
\end{align*}
Note $(N_{t\wedge\zeta_n} - N_{\tau_1})_{t\geq\tau_1}$ is an 
$L^2$-bounded local martingale and hence a true martingale. 
Since $\tau_1 \leq \tau_2 \wedge \zeta_n$ are bounded stopping times, 
the optional sampling theorem gives $\mathbb{E}[N_{\tau_2\wedge\zeta_n} - N_{\tau_1}\mid\mathcal{F}_{\tau_1}] = 0$. Taking conditional expectations yields
\begin{align*}
    \Phi_{\tau_1}^\varepsilon\geq \mathbb{E}\left[\Phi_{\tau_2 \wedge \zeta_n}^\varepsilon\mid \mathcal{F}_{\tau_1}\right]+\mathbb{E}\left[\int_{\tau_1}^{\tau_2 \wedge \zeta_n}\frac{(C_s^{\varepsilon})^{1-\delta}}{1-\delta}\left((1-\gamma)\Phi_s^{\varepsilon}\right)^{1-\frac{1}{\theta}}ds\mid \mathcal{F}_{\tau_1}\right].
\end{align*}
Since $\hat{V}$ is non-decreasing in its first argument,
\begin{align*}
    \Phi_{\tau_2 \wedge \zeta_n}=\hat{V}(X_{\tau_2 \wedge \zeta_n}^{\pi,l}+\varepsilon\hat{X}_{\tau_2 \wedge \zeta_n},Y_{\tau_2 \wedge \zeta_n})\geq \hat{V}(\varepsilon\hat{X}_{\tau_2 \wedge \zeta_n},Y_{\tau_2 \wedge \zeta_n}) \quad \mathbb{P}^y-a.s.
\end{align*}
If $\gamma \in (0,1)$ then $\hat{V}(\varepsilon\hat{X}_t,Y_t)=\frac{(\varepsilon\hat{X})^{1-\gamma}}{1-\gamma}g(Y_t)^m\geq0$. If $\gamma >1$ then $\hat{V}(\varepsilon\hat{X}_t,Y_t) \leq 0$. Using Lemma \ref{lem. discounted utility of wealth}
\begin{align*}
    \hat{V}(\varepsilon\hat{X}_t,Y_t)&=\hat{V}(\varepsilon,y)\exp\left(-\int_0^t\theta g(Y_u)^{-\frac{m}{\delta \theta}}du\right)D_t\geq \hat{V}(\varepsilon,y)D_t.   
\end{align*}
The lower bound $\hat{V}(\varepsilon\hat{X}_t,Y_t)\geq \hat{V}(\varepsilon,y)D_t$ and uniform integrability of $D_t$ allow conditional Fatou and conditional monotone convergence as $n \to \infty$.
Taking $\liminf$ as $n\to \infty$ yields
\begin{align*} \label{eq. Supersolution for perturbed consumption}
    \Phi_{\tau_1}^\varepsilon\geq \mathbb{E}\left[\Phi_{\tau_2 }^\varepsilon\mid \mathcal{F}_{\tau_1}\right] + \mathbb{E}\left[\int_{\tau_1}^{\tau_2}\frac{(C_s^{\varepsilon})^{1-\delta}}{1-\delta}\left((1-\gamma)\Phi_s^{\varepsilon}\right)^{1-\frac{1}{\theta}}ds\mid \mathcal{F}_{\tau_1}\right].
\end{align*}
Furthermore, $\liminf_{t \to \infty}\mathbb{E}[\Phi_t^\varepsilon]\geq \liminf_{t \to \infty}\mathbb{E}[\hat{V}(\varepsilon \hat{X}_t,Y_t)]=0$ by Lemma \ref{lem. Transversality}. Hence, for an arbitrary policy pair $(\pi,l)\in \mathcal{A}$, $\Phi_t^\varepsilon=\hat{V}(Z_t^\varepsilon,Y_t)$ is a supersolution to \eqref{eq. Utility Process}.
\end{proof}

\begin{proof}[Proof of Lemma \ref{lem. Comparison lemma}]
    As $\gamma >1$, $V,\hat{V} \in [-\infty,0]$. For all bounded stopping times $\tau_1\leq \tau_2$ we have
    \begin{equation} \label{eq. supersolution inequality}
\hat{V}(Z_{\tau_1}^\varepsilon,Y_{\tau_1})\geq \mathbb{E}\left[\hat{V}(Z_{\tau_2}^\varepsilon,Y_{\tau_2})+ \int_{\tau_1}^{\tau_2}f(C_s, \hat{V}(Z_{s}^\varepsilon,Y_s))ds \mid\mathcal{F}_{\tau_1}\right],
    \end{equation}
    \begin{equation} \label{eq. subsolution inequality}
            -V_{\tau_1}\geq -\mathbb{E}\left[V_{\tau_2}+ \int_{\tau_1}^{\tau_2}f(C_s,V_s)ds \mid\mathcal{F}_{\tau_1}\right].        
    \end{equation}

    We proceed by contradiction. Suppose that there exist a finite stopping time $\tau$ and a set $A \in \mathcal{F}_\tau$ of positive measure such that $\hat{V}_{\tau}(\omega)<V_{\tau}(\omega)$. Then $\mathbb{E}[\textbf{1}_{A}(\hat{V}_{\tau}-V_{\tau})]<0$. Introduce the stopping time
    \begin{equation*}
        \sigma \coloneqq \inf\{s\geq \tau : V_{s+}-\hat{V}_{s+}\leq 0 \}.
    \end{equation*}
     By construction, for all $\omega \in A$ and $s \in [\tau(\omega),\sigma(\omega))$
        \begin{align*}
        0\geq V_s(\omega)\geq \hat{V}_s(\omega)\geq \hat{V}(\varepsilon,y)\,D_s(\omega).
    \end{align*}
    As $D_t$ is uniformly integrable and finite, this necessarily implies $V$ is finite with finite expectation for $\omega \in A$ and $s \in [\tau(\omega),\sigma(\omega))$. Fix $n \in \mathbb{N}$, and define $\tau_1\coloneqq \tau \wedge n$ and $\tau_2 \coloneqq \sigma \wedge n$. Multiply \eqref{eq. supersolution inequality} and \eqref{eq. subsolution inequality} by $\textbf{1}_{A \cap \{\tau\leq n\}}$ and take expectations, yielding
    \begin{align} \label{eq. supersolution inequality with indicators}
\mathbb{E}[\textbf{1}_{A \cap \{\tau\leq n\}}\hat{V}(Z_{\tau \wedge n}^\varepsilon,Y_{\tau \wedge n})]
&\geq \mathbb{E}\left[\textbf{1}_{A \cap \{\tau\leq n\}}\hat{V}(Z_{\sigma \wedge n}^\varepsilon,Y_{\sigma \wedge n})\right]\nonumber\\
&\quad + \mathbb{E}\left[\textbf{1}_{A \cap \{\tau\leq n\}}\int_{\tau \wedge n}^{\sigma \wedge n}f(C_s, \hat{V}(Z_{s}^\varepsilon,Y_s))ds \right],
    \end{align}
    \begin{align} \label{eq. subsolution inequality with indicators}
\mathbb{E}[-\textbf{1}_{A \cap \{\tau\leq n\}}V_{\tau \wedge n}]
&\geq -\mathbb{E}\left[\textbf{1}_{A \cap \{\tau\leq n\}}V_{\sigma \wedge n}\right]\nonumber\\
&\quad - \mathbb{E}\left[\textbf{1}_{A \cap \{\tau\leq n\}}\int_{\tau \wedge n}^{\sigma \wedge n}f(C_s,V_s)ds\right]\geq 0.
    \end{align}
    Since $\mathbb{E}[-\textbf{1}_{A \cap \{\tau\leq n\}}V_{\tau_1}]\geq0$ is finite, the expectations on the right-hand side of \eqref{eq. subsolution inequality with indicators} are also finite. Thus, we may add \eqref{eq. subsolution inequality with indicators} and \eqref{eq. supersolution inequality with indicators}:

    \begin{align*}
        \mathbb{E}&\left[\textbf{1}_{A \cap \{\tau\leq n\}}\left(\hat{V}(Z_{\tau \wedge n}^\varepsilon,Y_{\tau \wedge n})-V_{ \tau \wedge n}\right)\right]\\
        & \geq \mathbb{E}\left[\textbf{1}_{A \cap \{\tau\leq n\}}\left(\hat{V}(Z_{\sigma \wedge n}^\varepsilon,Y_{\sigma \wedge n})-V_{ \sigma \wedge n}\right)\right]\\
         & \quad+ \mathbb{E}\left[\textbf{1}_{A \cap \{\tau\leq n\}}\int_{\tau \wedge n}^{\sigma \wedge n}f(C_s, \hat{V}(Z_{s}^\varepsilon,Y_s))-f(C_s,V_s)ds \right],
    \end{align*}
    where all integrals are well-defined. Since $f$ is decreasing in its second argument for $\theta\in(0,1)$, the integral term is non-negative on $A\cap\{\tau\leq n\}$. It remains to pass to the limit in the terminal term. On $\{\sigma<n\}$, the definition of $\sigma$ gives
    \[
        \hat{V}(Z_{\sigma}^\varepsilon,Y_{\sigma})-V_{\sigma}\geq0.
    \]
    On $\{\tau\leq n\leq\sigma\}$, we have $\hat{V}(Z_n^\varepsilon,Y_n)-V_n\geq \hat{V}(Z_n^\varepsilon,Y_n)\geq \hat{V}(\varepsilon\hat{X}_n,Y_n)$, so Lemma \ref{lem. Transversality} gives
    \[
        \liminf_{n\to\infty}\mathbb{E}\!\left[\textbf{1}_{A\cap\{\tau\leq n\leq\sigma\}}\left(\hat{V}(Z_n^\varepsilon,Y_n)-V_n\right)\right]\geq0.
    \]
    Finally, on the left-hand side,
    \[
        \textbf{1}_{A\cap\{\tau\leq n\}}\left|\hat{V}(Z_{\tau\wedge n}^\varepsilon,Y_{\tau\wedge n})-V_{\tau\wedge n}\right|
        \leq \textbf{1}_A |\hat{V}(\varepsilon,y)|D_\tau,
    \]
    and the right-hand side is integrable by uniform integrability of $D$. Dominated convergence therefore yields
    \begin{equation*}
        \mathbb{E}\left[\textbf{1}_{A}\left(\hat{V}(Z_{\tau}^\varepsilon,Y_{\tau})-V_{ \tau }\right)\right]\geq0.
    \end{equation*}
    This contradicts the choice of $A$.

\end{proof}

\begin{proof}[Proof of Corollary \ref{cor. solution to the minimisation problem is unique}]
Let $g_1$ and $g_2$ be solutions to the variational problem \eqref{eq. 1-d minisation problem}. Theorem \ref{thm. Main Theorem} yields
\[
\frac{x^{1-\gamma}}{1-\gamma}g_1(y)^m=\sup_{(\pi,l) \in \mathcal{A}}V_0^{\pi,l}=\frac{x^{1-\gamma}}{1-\gamma}g_2(y)^m.
\]
As $m\neq0$, the mapping $t \mapsto t^m$ is bijective on $(0,\infty)$. Thus $g_1(y)=g_2(y)$ for all $y \in E$.
\end{proof}
\nihil{
\begin{proof}[Proof of Corollary \ref{cor. Impossibility theorem}]
We follow the proof of Proposition \ref{prop. Candidate Value function solves BSDE} verbatim to obtain
\begin{align*}
  \mathbb{E}\Big[\int_t^{\infty}\frac{(X_u\,\hat{l}_u)^{1-\delta}}{1-\delta} & \left((1-\gamma)V^{\hat{l}}_u\right)^{1-\frac{1}{\theta}}du \big| \mathcal{F}_t\big]\\
  &=\frac{X_t^{1-\gamma}}{1-\gamma}g(Y_t)^m\left(1-\mathbb{E}^{\tilde{\mathbb{P}}}\left[\exp\left(-\int_t^\infty \theta\,g(Y_u)^{-\frac{m}{\delta \theta}}du\right)\Big| \mathcal{F}_t\right]\right).
\end{align*}
Thus, $V$ is a solution to \eqref{eq. Utility Process} if and only if
\begin{equation*}
    -\int_t^\infty \theta\, g(Y_u)^{-\frac{m}{\delta \theta}}du=-\infty.
\end{equation*}
As $0<g<M$, this is true if and only if $\theta>0$. Hence, if $\theta<0$, $V$ does not solve \eqref{eq. Utility Process} and cannot be the value function.
\end{proof}
}
\bibliographystyle{plainnat}
\bibliography{Ref}

\end{document}